\documentclass[pdflatex,sn-basic]{sn-jnl}

\usepackage{graphicx}%
\usepackage{multirow}%
\usepackage{amsmath,amssymb,amsfonts}%
\usepackage{amsthm}%
\usepackage{mathrsfs}%
\usepackage[title]{appendix}%
\usepackage{xcolor}%
\usepackage{textcomp}%
\usepackage{manyfoot}%
\usepackage{booktabs}%
\usepackage{algorithm}%
\usepackage{algorithmicx}%
\usepackage{algpseudocode}%
\usepackage{listings}%
\usepackage{mathtools}
\usepackage[T1]{fontenc}

\theoremstyle{thmstyleone}%
\newtheorem{theorem}{Theorem}
\newtheorem{proposition}[theorem]{Proposition}%

\newtheorem{lemma}[theorem]{Lemma}
\newtheorem{corollary}[theorem]{Corollary}
\theoremstyle{thmstyletwo}

\newtheorem{assumption}[theorem]{Assumption}
\theoremstyle{remark}
\newtheorem{remark}{Remark}

\theoremstyle{thmstylethree}
\newtheorem{definition}{Definition}

\begin{document}

\title[Identifiability of a dynamical system from a single snapshot]{Identifiability of a linear dynamical reaction-diffusion system with directed interactions from the second-order statistics of a single snapshot}

\author[1]{\fnm{Pirmin} \sur{Schlicke}}\email{pirmin.schlicke@plus.ac.at}

\affil[1]{\orgdiv{Department of Biosciences and Medical Biology}, \orgname{University of Salzburg}, \state{Salzburg}, \country{Austria}}

\abstract{

A single spatial snapshot of a system with multiple interacting populations captures positions but not the underlying dynamics that have led to the observation. This work asks what the stationary second-order spatial statistics of a snapshot can reveal about the structure of the linear reaction-diffusion dynamical model, and in particular about the pairwise directed entries of its interaction operator. The observation is the matrix-valued structure factor $S(k)$ as a one-parameter family in the diffusion-inherited dependence on the wavenumber $\kappa=|k|^2$. The main result is that the directed interaction operator $A$ and the diffusivity matrix $D$ are generically identifiable for any number of interacting populations from $\{S(\kappa)\}$ up to a single global time rescaling.
The findings do not contradict known bounds and the recent result that spatial observational data only determine the Markov equivalence class of the underlying graph. In this work, we identify physical assumptions that lift this barrier. The absolute time scale is not identifiable from a single snapshot of any sample size and requires an external clock. For tumor-immune tissue, the results imply that directed cell-cell interactions are in principle recoverable from one section, that the direction cannot be inferred from the mean densities or a covariance matrix, but from the wavenumber dependence of the cross and autospectra, and that a baseline section determines the interaction structure but not the absolute rates.}

\keywords{identifiability, reaction-diffusion model, structure factor, directed interaction, snapshot, tumor-immune interactions}

\maketitle

\section{Introduction}\label{sec:intro}

Multiplexed tissue imaging resolves the identity and position of thousands of single cells in one imaged tissue field \citep{keren2018,schurch2020,danenberg2022}, and spatial statistics of the resulting multi-type cell point patterns, in particular pair correlation functions and neighborhood enrichment, are routinely utilized to describe immune cell localization \citep{keren2018,schurch2020,bull2020}. The observation of a fixed tissue section captures a single point in time, whereas the underlying dynamics that have produced it (for example directed cell-cell interactions) are quantities of interest in clinical decision-making. The target of a therapeutic intervention is typically such a directed interaction, for example which cell type kills, recruits or suppresses another. Opposite interaction directions imply different interventions, yet they can produce the same spatial proximity, since both place the two cell types close to each other. The descriptors in routine use, such as composition, neighborhood enrichment and co-localization, summarize the pattern to only a few numbers per cell type or pairs, but not necessarily to interaction directionality. Longitudinal observations of the same tissue would resolve the directionality, but they are commonly not available in oncology, and particularly not in routine care.

Spatially resolved tumor-immune dynamics have been modeled by reaction-diffusion systems \citep{matzavinos2004}, and reviews of the non-spatial models of the same interactions are available \citep{eftimie2011}. In the following, the linearization around a stationary state of such a system is investigated. For a single population of cells it has been shown that the net proliferation rate and the diffusion coefficient of a reaction-diffusion model can be calibrated from a biopsy using the spatial power spectrum of multiplex-immunofluorescence section \citep{pasetto2024}. It can also be achieved from routine histology slides by jointly fitting the two-point correlation function and the power spectral density of the nucleus point pattern to their analytical counterparts \citep{schlicke2026hdtd}, and the calibration has been validated against simulated biopsies \citep{hofmann2026}. The parameters are obtained on a normalized time scale, since one slide does not fix the absolute time scale \citep{schlicke2026hdtd,hofmann2026}. This work addresses the corresponding question for several different and interacting cell populations: what do the first- and second-order spatial statistics of one stationary snapshot reveal about the interaction matrix of a linear multi-population reaction-diffusion model?

For a single stationary covariance the antisymmetric part of a linear drift, measured against the stationary covariance, is the irreversible circulating component of the dynamics, and it is generically not determined by the stationary distribution alone \citep{godreche2019,loos2020,zhang2023}. The known ways around this bound restrict the interaction graph, use higher-order cumulants, or observe the stationary law under several experimental conditions \citep{kuegler2014,salehkaleybar2026}. A spatial snapshot offers a different path: it does not only provide a single covariance, but a one-parameter family of covariances dependent on the wavenumber, and diffusion deforms the drift across this family by $-|k|^2D$ with the unknown diffusivity matrix $D$. This means that the interaction matrix $A$ and the diffusivity matrix $D$ have to be recovered jointly.

In the following, it is proved that $A$ and $D$ are generically identifiable up to a global time rescaling from this family, for every number $K$ of interacting populations, that $\binom{K+1}{2}+1$ distinct wavenumbers suffice, that the time scale itself is not identifiable from any snapshot alone without further (external) information, and that a population-specific normalization imposed by unknown detection probabilities costs exactly the $K$ coupling scales. The following section~\ref{sec:setup} defines the model, the observable and the employed identifiability notation. Section~\ref{sec:results} contains the analytical proofs. Section~\ref{sec:application} applies the model to the example setup for tumor-immune tissue, and Section~\ref{sec:discussion} states the consequences, the relation to previous work and the limitations.

\section{Definitions and observables}\label{sec:setup}

Let $\rho(x,t) \in \mathbb{R}^K$ be the vector of densities of $K$ populations at position $x \in \mathbb{R}^d$ and time $t$. Assume the population dynamics have reached a stationary and spatially homogeneous state, then the stationary mean $\lambda := \mathbb{E}[\rho(x,t)] \in \mathbb{R}^K$ does not depend on $x$ or $t$. The fluctuation around it can be expressed as $u(x,t)=\rho(x,t)-\lambda$ and $u: \mathbb{R}^d \times \mathbb{R}_{\ge 0} \to \mathbb{R}^K$ satisfies the linear stochastic reaction-diffusion equation
\begin{equation}
\partial_t u(x,t) = Au(x,t) + D \nabla^2 u(x,t) + \xi(x,t)
\label{eq:rde}
\end{equation}
as the linearization of the population dynamics around $\lambda$, with $A \in \mathbb{R}^{K\times K}$, $D = \mathrm{diag}(D_1,\dots,D_K)$ with $D_i \geq 0$ for all $i$, and Gaussian noise
\begin{equation}
\mathbb{E} \left[\xi(x,t) \xi(x',t')^{T}\right] = \Sigma \delta(x-x') \delta(t-t'),
\qquad \Sigma = \Sigma^{T} \text{ positive definite}.
\label{eq:noise}
\end{equation}
Equation \eqref{eq:rde} is the linearization of the population dynamics around the stationary state $\lambda$. If the full dynamics read $\partial_t \rho(x,t) = F(\rho(x,t)) + D \nabla^2 \rho(x,t) + \xi(x,t)$ with $F(\lambda)=0$, then we have that the matrix $A=\partial F/\partial \rho (\lambda)$ is the Jacobian of the reaction terms evaluated at $\lambda$. The diagonal entries $A_{ii}$ are the net turnover rates for each respective population, and the off-diagonal entries $A_{ij}$ for $i \neq j$ describe the directed effect of population $j$ on population $i$. The motility of population $i$ is denoted by $D_i$, and $\Sigma$ is the noise covariance.

For each wavenumber $k$ the Fourier transform $\hat u(k,t) = \int u(x,t)e^{-ik\cdot x}dx$ satisfies the linear equation
\[
\partial_t \hat u(k,t) = M(k) \hat u(k,t) + \hat \xi(k,t), \qquad
M(k) := A - |k|^2 D \in \mathbb{R}^{K\times K},
\]
where $\hat \xi(k,t)$ is the Fourier transform of $\xi(x,t)$. Equation \eqref{eq:noise} assumes white noise. This means that $\Sigma$ does not depend on $k$ and therefore the wavenumber can only affect the model via $M(k)$.

\begin{assumption}[Stability]
\label{ass:stab}
For every $k \in \mathbb{R}^d$ the eigenvalues of the matrix $M(k)$ have negative real part (i.e., $M(k)$ is Hurwitz). Since $D_i \geq 0$ for all $i$, a sufficient condition is that the symmetric part of $A$ is negative definite.
\end{assumption}
Under Assumption~\ref{ass:stab} the fluctuation $u(x,t)$ has a stationary and unique Gaussian distribution with mean zero and spatial covariance $C(r)=\mathbb{E}[u(x)u(x+r)^{T}] \in \mathbb{R}^{K \times K}$ in the sense of generalized random fields. The latter depends on $r$, and is independent of $x$. The structure factor $S(k) \in \mathbb{C}^{K\times K}$ is the Fourier transform of $C(r)$, i.e. $S(k) = \int C(r) e^{-ik\cdot r} dr \in \mathbb{C}^{K\times K}$ and its entries $S_{ij}(k)$ are the cross spectra of populations $i$ and $j$ at wavenumber $k$, and as such, $S_{ii}(k)$ is the power spectrum of $i$.
Equivalently, $S(k)$ is the stationary covariance of $\hat u(k,\cdot)$. It therefore solves the algebraic Lyapunov equation
\begin{equation}
M(k) S(k) + S(k) M(k)^{T} + \Sigma = 0.
\label{eq:lyap}
\end{equation}
For a marked point pattern with intensities $\lambda_i$ and cross pair correlation functions $g_{ij}(r)$, we have the multitype version of the spectrum \citep{bartlett1964}
\begin{equation}
S_{ij}(k) = \lambda_i\delta_{ij} + \lambda_i\lambda_j\int \bigl(g_{ij}(r)-1\bigr)e^{-ik\cdot r} dr.
\label{eq:bartlett}
\end{equation}
The first term $\lambda_i\delta_{ij}$ does not depend on $k$, and it is the spectrum that a Poisson pattern with the same intensities would have. The second term supports the spatial correlations.

The data of a snapshot are the stationary mean $\lambda$ and the structure factor $S(k)$. In the linearized model $\lambda$ is independent of $(A,D,\Sigma)$, which means that identifiability of $(A,D,\Sigma)$ only depends on $S(\cdot)$. Throughout, $\kappa=|k|^2$, and $A_s$ and $A_a$ denote the symmetric and antisymmetric parts of $A$. We introduce the following definitions:

\begin{definition}[Parameters, observation map, inverse image]
\label{def:setup}
$\Theta$ denotes the set of triples $\theta=(A,D,\Sigma)$ satisfying Assumption~\ref{ass:stab} with $D\geq0$ diagonal and $\Sigma$ positive definite. We have $\Theta \subset \mathbb{R}^{n_U}$ with $n_U=K^2+K+\binom{K+1}{2}$. The observation map is denoted as $\Phi:\Theta\to\{S(\kappa):\kappa\ge0\}$, $\theta\mapsto S_\theta(\cdot)$, with $S_\theta(\kappa)$ the solution of equation \eqref{eq:lyap}. The inverse image of $\theta_0 \in \Theta$ is $\Phi^{-1}(\Phi(\theta_0))\cap\Theta$, namely the set of parameters in $\Theta$ with the same structure factor as $\theta_0$.
\end{definition}

\begin{definition}[Identifiability]
\label{def:ident}
Let $\theta_0\in\Theta$. The triple $(A,D,\Sigma)$ is globally identifiable at $\theta_0$ up to the time scale if the inverse image of $\theta_0$ is the half-line $\{s\theta_0: s>0\}$, and globally identifiable if the inverse image is $\{\theta_0\}$. It is locally identifiable at $\theta_0$ up to the time scale if $\ker d\Phi|_{\theta_0}=\mathrm{span}\{\theta_0\}$, and locally identifiable if $\ker d\Phi|_{\theta_0}=\{0\}$. For fixed $(D,\Sigma)$ the matrix $A$ is called identifiable if from $\Phi(A',D,\Sigma)=\Phi(A,D,\Sigma)$ and $(A',D,\Sigma)\in\Theta$ follows $A'=A$.
\end{definition}

\begin{definition}[Generic]
\label{def:generic}
A feature holds for a generic $\theta \in \Theta$ if there is a nonzero polynomial $q$ on $\mathbb{R}^{n_U}$ such that the feature holds for any $\theta \in \Theta$ with $q(\theta) \neq 0$. The zero set of $q$ has Lebesgue measure zero. The feature therefore holds on a dense subset of full measure of the interior $\Theta^o = \{\theta \in \Theta : D_i>0 \ \forall i\}$.
\end{definition}

\section{Results}\label{sec:results}

\begin{lemma}[Structure of $S(k)$]
\label{lem:sym}
With Assumption~\ref{ass:stab} and for every $k$ the solution $S(k)$ of \eqref{eq:lyap} is a real symmetric positive-definite matrix that only depends on $|k|^2$.
\end{lemma}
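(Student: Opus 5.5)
The argument rests on the standard integral representation of the solution of a Lyapunov equation with a Hurwitz coefficient. Fix $k$ and write $M=M(k)=A-|k|^2D$, a real matrix because $A$ and $D$ are real. By Assumption~\ref{ass:stab}, $M$ is Hurwitz, so there are constants $c,\alpha>0$ with $\|e^{tM}\|\le c\,e^{-\alpha t}$ for $t\ge0$. I define
\[
S_*:=\int_0^\infty e^{tM}\Sigma e^{tM^{T}}\,dt .
\]
The integral converges absolutely, and $S_*$ is evidently real and symmetric. Differentiating $e^{tM}\Sigma e^{tM^T}$ gives $MS_*+S_*M^T=\bigl[e^{tM}\Sigma e^{tM^T}\bigr]_0^\infty=-\Sigma$. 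So $S_*$ solves \eqref{eq:lyap}.

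Next I show that $S_*$ is the only solution. The linear operator $\mathcal L_M(X)=MX+XM^T$ on $\mathbb C^{K\times K}$ has eigenvalues $\mu_i+\mu_j$, where $\mu_i$ runs over the eigenvalues of $M$. One sees this from $\mathcal L_M=I\otimes M+M\otimes I$ in vectorized form. Each $\mu_i+\mu_j$ has negative real part, so it is nonzero and $\mathcal L_M$ is invertible. Hence the solution $S(k)$ of \eqref{eq:lyap}, even when sought among complex matrices, equals $S_*$. This also matches the definition of $S(k)$ as the Fourier transform of $C(r)$: that transform is the stationary covariance of $\hat u(k,\cdot)$ and satisfies \eqref{eq:lyap}, so uniqueness identifies it with $S_*$. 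In particular $S(k)$ is real, although a priori it lies in $\mathbb C^{K\times K}$.

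For positive definiteness, take $v\in\mathbb R^K\setminus\{0\}$. Then $v^TS(k)v=\int_0^\infty w(t)^T\Sigma\, w(t)\,dt$ with $w(t)=e^{tM^T}v$. The vector $w(t)$ never vanishes because $e^{tM^T}$ is invertible, and $\Sigma$ is positive definite, so the integrand is continuous and strictly positive. The integral is therefore positive, and $S(k)$ is positive definite. The same computation with $v\in\mathbb C^K$ and $v^*$ gives Hermitian positivity if one wants it.

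Finally, $M(k)$ depends on $k$ only through $\kappa=|k|^2$, and $\Sigma$ does not depend on $k$ because the noise is white. By uniqueness the solution is a function of $M(k)$ and $\Sigma$ alone, so $S(k)$ depends only on $|k|^2$.

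The one step that needs care is uniqueness. It is what rules out a complex or non-symmetric solution and what links the Fourier-defined $S(k)$ to the Lyapunov solution. The spectral argument for $\mathcal L_M$ handles it cleanly. Everything else is routine estimation.
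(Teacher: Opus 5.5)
Your proof is correct and uses the same two ingredients as the paper's: uniqueness of the Lyapunov solution for Hurwitz $M(k)$, and the representation $S(k)=\int_0^\infty e^{tM}\Sigma e^{tM^{T}}\,dt$. The difference is only in organization: you read realness and symmetry directly off the integral and invoke uniqueness over complex matrices, with the Kronecker-sum eigenvalue argument the paper leaves implicit. The paper instead splits $S=S_R+iS_I$ to get realness and uses the transposed equation to get symmetry.
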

\begin{proof}
Let $S(k)=S_R(k)+iS_I(k)$ with $S_R(k),S_I(k)$ real valued and $i$ the imaginary unit. Substitute into \eqref{eq:lyap}, then separate real and imaginary parts to receive
\[
M S_R + S_R M^{T} + \Sigma = 0, \qquad M S_I + S_I M^{T} = 0.
\]
The second equation is a homogeneous Lyapunov equation. Since $M$ is Hurwitz, the only solution is $S_I(k)=0$. We receive $S(k)=S_R(k)$, i.e., $S(k)$ is real. Transposing \eqref{eq:lyap} with $\Sigma=\Sigma^{T}$ shows that $S^{T}(k)$ also fulfills the equation, and by uniqueness we receive $S^{T}(k)=S(k)$. From $\Sigma$ positive definite and $M$ Hurwitz we have $S(k) = \int_0^\infty e^{Mt} \Sigma e^{M^{T}t} dt$, and thus $S$ is positive definite. $M(k)$ depends on $|k|^2$, and thus also $S(k)$.
\end{proof}

Thus, within the model class, $S(k)$ does not have an imaginary or an anisotropic part. Recoverable information is fully contained in the real symmetric family $\{S(\kappa)\}$, and directionality is identified via the dependence of this family on $\kappa$.

\begin{proposition}[Time scale invariance]
\label{prop:scale}
For any $c>0$, we have that $\Phi(A,D,\Sigma) = \Phi(cA,cD,c\Sigma)$, and $\lambda$ remains identical. Consequently, if $\Sigma$ is a free parameter, $(A,D,\Sigma)$ is identifiable only up to the one-parameter scaling, and the absolute time scale remains not identifiable.
\end{proposition}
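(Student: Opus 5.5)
The proof is a direct substitution into the Lyapunov equation \eqref{eq:lyap}, followed by the uniqueness of its solution under Assumption~\ref{ass:stab}.

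\textbf{Step 1: the scaled triple is admissible.} Fix $c>0$ and set $\theta_c=(cA,cD,c\Sigma)$. Then $cD$ is diagonal with nonnegative entries and $c\Sigma$ is positive definite. The drift becomes
\[
M_c(k)=cA-|k|^2cD=cM(k),
\]
whose eigenvalues are $c$ times those of $M(k)$. Since $c>0$, they still have negative real part, so $M_c(k)$ is Hurwitz for every $k$ and $\theta_c\in\Theta$.

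\textbf{Step 2: the scaled triple gives the same structure factor.} Let $S(\kappa)$ solve \eqref{eq:lyap} for $\theta$. Then
\[
cM S + S (cM)^{T} + c\Sigma = c\,(MS+SM^{T}+\Sigma)=0,
\]
so $S(\kappa)$ also solves the Lyapunov equation for $\theta_c$. Because $M_c$ is Hurwitz, the Lyapunov equation has a unique solution, which is the argument already used in Lemma~\ref{lem:sym}. Hence $S_{\theta_c}(\kappa)=S_\theta(\kappa)$ for all $\kappa\ge0$, that is, $\Phi(\theta_c)=\Phi(\theta)$.

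An alternative route uses the integral representation. Substituting $t\mapsto t/c$ in $\int_0^\infty e^{cMt}c\Sigma e^{cM^{T}t}\,dt$ returns $\int_0^\infty e^{Mt}\Sigma e^{M^{T}t}\,dt$ and gives the same conclusion. This substitution also shows the physical meaning: the rescaling is a change of the time unit, $t\mapsto ct$, which leaves any equal-time stationary statistic unchanged.

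\textbf{Step 3: the mean is unchanged.} The stationary mean $\lambda$ is independent of $(A,D,\Sigma)$ in the linearized model, as noted before Definition~\ref{def:setup}. Therefore $\lambda$ is also unchanged under the rescaling.

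\textbf{Step 4: consequence for identifiability.} The inverse image of any $\theta_0$ contains the whole half-line $\{s\theta_0:s>0\}$. This half-line is never a single point, because $\Sigma\neq0$ forces $\theta_0\neq0$. So global identifiability in the sense of Definition~\ref{def:ident} fails, and at best one can obtain identifiability up to the time scale.

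For the local version, differentiate $s\mapsto\Phi(s\theta_0)$ at $s=1$. This curve is constant, so its derivative is zero, which shows $\theta_0\in\ker d\Phi|_{\theta_0}$.

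Since the snapshot data $(\lambda,S)$ are exactly invariant along this curve, no estimator can recover $c$ regardless of sample size. The absolute time scale is therefore not identifiable.

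There is no real obstacle here. The only point that needs care is checking that $\theta_c$ stays in $\Theta$, so that uniqueness of the Lyapunov solution can be invoked; Step 1 handles this.
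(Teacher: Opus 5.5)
Your proof is correct and follows the same route as the paper. You substitute $(cA,cD,c\Sigma)$ into \eqref{eq:lyap}, factor out $c$, use that $cM(k)$ stays Hurwitz so the solution is unique, and note the time-rescaling interpretation. The only difference is that the paper also checks the invariance of $\lambda$ in the affine model via $\lambda=-A^{-1}b=-(cA)^{-1}(cb)$, so $b$ is rescaled as well. Your Step 3 instead relies on the setup's convention that $\lambda$ is independent of $(A,D,\Sigma)$, which suffices within the linearized model.
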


\begin{proof}
The stationary state $\lambda$ around which \eqref{eq:rde} is linearized is independent of $(A,D,\Sigma)$. In the affine model $\partial_t \rho = A \rho + b + D \nabla^2 \rho + \xi$ we have $\lambda=-A^{-1}b=-(cA)^{-1}(cb)$, thus $\lambda$ is unchanged under rescaled $b$ with factor $c$. For the mapping $(A,D,\Sigma)\mapsto(cA,cD,c\Sigma)$ we receive $M(k)\mapsto cM(k)$, so the left-hand side of \eqref{eq:lyap} maps to $c\left(M(k)S(k)+S(k)M(k)^{T}+\Sigma\right)$. The transformed equation is also solved by $S(k)$, since $c>0$. Equivalently, the transformation is a time rescaling $t\mapsto t/c$.
\end{proof}

\begin{remark}
\label{rem:scaleknown}
If $\Sigma$ is known, the rescaling is not available. In that case, the scale is identifiable (see Theorem~\ref{thm:global}). Non-identifiability is therefore a consequence of unknown $\Sigma$. The single-population calibration in \cite{schlicke2026hdtd} addressed this same invariance through a normalized time scale. The dispersion length $\sqrt{D/\gamma}$, with $\gamma$ the net proliferation rate, is invariant under the time rescaling and can be recovered without knowledge about the time scale in simulated biopsies \citep{hofmann2026}.
\end{remark}

\begin{proposition}[Invariance of $S(k)$ under perturbed $A$]
\label{prop:inv}
Let $(D,\Sigma)$ be fixed. For $A+J-|k|^2D$ Hurwitz for all $k$, the perturbation of the interaction matrix $A\mapsto A+J$ leaves every structure factor $S(k)$ unchanged if and only if
\begin{equation}
J S(k) + S(k) J^{T} = 0 \qquad \text{for all } k.
\label{eq:inv}
\end{equation}
In particular, a purely symmetric $J$ satisfies \eqref{eq:inv} if and only if $J=0$, while a purely antisymmetric $J$ satisfies it if and only if $[J,S(k)] = 0$ for all $k$.
\end{proposition}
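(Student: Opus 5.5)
The plan is to compare the two Lyapunov equations \eqref{eq:lyap} directly, using the uniqueness of their solutions under Assumption~\ref{ass:stab}.

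\emph{Sufficiency.} Suppose \eqref{eq:inv} holds, and write $M'(k)=A+J-|k|^2D = M(k)+J$. Then
\[
M'(k)S(k)+S(k)M'(k)^T+\Sigma = \bigl(M S+SM^T+\Sigma\bigr)+\bigl(JS+SJ^T\bigr)=0 .
\]
So $S(k)$ solves the Lyapunov equation for $M'(k)$. Because $M'(k)$ is Hurwitz by hypothesis, that solution is unique. Hence the structure factor of the perturbed system equals $S(k)$ for every $k$.

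\emph{Necessity.} Conversely, suppose $S'(k)=S(k)$ for all $k$. Subtracting the two Lyapunov equations leaves exactly $JS(k)+S(k)J^T=0$, which is \eqref{eq:inv}.

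\emph{Purely antisymmetric $J$.} Here $J^T=-J$, so \eqref{eq:inv} reads $JS(k)-S(k)J=[J,S(k)]=0$ for all $k$. This gives the claimed equivalence directly.

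\emph{Purely symmetric $J$.} Here \eqref{eq:inv} becomes $JS+SJ=0$ with $S=S(k)$ symmetric positive definite by Lemma~\ref{lem:sym}. I will show $J=0$ using a single $k$, say $k=0$. Diagonalize $S=Q\,\mathrm{diag}(s_1,\dots,s_K)\,Q^T$ with all $s_i>0$, and set $\tilde J=Q^TJQ$. The equation becomes entrywise $(s_i+s_j)\tilde J_{ij}=0$. Since $s_i+s_j>0$, every $\tilde J_{ij}=0$, so $J=0$. The converse direction ($J=0$ satisfies \eqref{eq:inv}) is trivial.

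An alternative route to the same conclusion is to note that $X\mapsto SX+XS$ is the Lyapunov operator of the Hurwitz matrix $-S$, hence injective.

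No step is a real obstacle. The only care needed is to invoke uniqueness for the \emph{perturbed} matrix $M'(k)$, which is why the Hurwitz hypothesis on $A+J-|k|^2D$ appears in the statement. One should also remark that the symmetric case uses symmetry of $J$: for a general $J$ the equation $JS+SJ^T=0$ has the nonzero solutions $J=WS^{-1}$ with $W$ antisymmetric.
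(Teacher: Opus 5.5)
Your proposal is correct and follows essentially the same route as the paper. Both directions come from subtracting the two Lyapunov equations and invoking uniqueness for the Hurwitz matrix $A+J-|k|^2D$, the antisymmetric case reduces to $[J,S(k)]=0$ identically, and your diagonalization giving $(s_i+s_j)\tilde J_{ij}=0$ is the explicit form of the paper's argument that $X\mapsto XS(k)+S(k)X$ has eigenvalues $\mu_i+\mu_j>0$ and is therefore injective.
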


\begin{proof}
Replacing $A \mapsto A+J$ yields $M(k) \mapsto M(k)+J$. If $S(k)$ solves \eqref{eq:lyap} with the same $\Sigma$ before and after the perturbation, subtracting gives $JS(k)+S(k)J^{T}=0$. Conversely, if $S(k)$ solves \eqref{eq:inv}, then $S(k)$ solves the perturbed Lyapunov equation and, since it is the unique solution, it also is the new structure factor. We have $S(k)=S(k)^{T}$ (cf. proof of Lemma \ref{lem:sym}) and $(J S(k))^{T} = S(k) J^{T}$, so \eqref{eq:inv} yields that $JS(k)$ is antisymmetric. Let $J$ be symmetric, then $J S(k) + S(k) J^{T} = J S(k) + S(k) J$, and the linear map $X \mapsto X S(k) + S(k) X$ on symmetric matrices $X$ has the eigenvalues $\mu_i+\mu_j>0$, where $\mu_1 , ... , \mu_K > 0$ are the eigenvalues of $S(k)$. This means the map is injective, and $J=0$. Let $J$ be antisymmetric, then $JS(k) + S(k) J^{T} = JS(k) - S(k) J = [J,S(k)]$. For a general $J=J_s + J_a$ the two parts couple, since \eqref{eq:inv} reads $J_s S(k) + S(k) J_s+[J_a,S(k)]=0$.
\end{proof}

\begin{corollary}[Identifiability of $A$ for known $(D,\Sigma)$]
\label{cor:id}
Let $(D,\Sigma)$ be fixed. The interaction matrix $A$ is identifiable from $\{S(k)\}$ if and only if $JS(k)+S(k)J^{T}=0$ for all $k$ is only fulfilled by $J=0$. In that case both the symmetric part $A_s$ and the antisymmetric part $A_a$ are determined.
\end{corollary}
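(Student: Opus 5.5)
The corollary follows from Proposition~\ref{prop:inv}, with the map $A\mapsto\{S_A(k)\}$ at fixed $(D,\Sigma)$ as the object of study. The plan is to show both directions of the equivalence by rewriting equality of structure factors as the invariance condition \eqref{eq:inv}. We then read off $A_s$ and $A_a$ from the fact that $A$ is determined as a whole.

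For the direction ``only $J=0$ solves \eqref{eq:inv} $\Rightarrow$ $A$ identifiable'', let $(A',D,\Sigma)\in\Theta$ with $\Phi(A',D,\Sigma)=\Phi(A,D,\Sigma)$, and set $J:=A'-A$. Membership in $\Theta$ means $A+J-|k|^2D$ is Hurwitz for all $k$, which is exactly the hypothesis of Proposition~\ref{prop:inv}. Equal structure factors then give, by the proposition, $JS(k)+S(k)J^{T}=0$ for all $k$. Concretely, subtract the two Lyapunov equations \eqref{eq:lyap}, which share $S(k)$ and $\Sigma$. The hypothesis forces $J=0$, i.e.\ $A'=A$.

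For the converse, suppose some $J\neq0$ satisfies \eqref{eq:inv} for all $k$. We want an admissible $A'\neq A$ with the same structure factor, and this is the step needing care: Proposition~\ref{prop:inv} requires $A+J-|k|^2D$ to be Hurwitz. The condition \eqref{eq:inv} is linear in $J$, so $tJ$ also satisfies it for every $t$. Choose $t>0$ small so that $A+tJ$ keeps the stability property. The cleanest argument is to use \eqref{eq:inv} itself. Since $S(k)$ is symmetric positive definite (Lemma~\ref{lem:sym}) and solves $(M+tJ)S+S(M+tJ)^{T}+\Sigma=0$ with $\Sigma\succ0$, a standard Lyapunov argument gives the Hurwitz property of $M(k)+tJ$ for every $k$, and indeed for any $t$. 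Take a left eigenvector $v$ with $v^{*}(M+tJ)=\mu v^{*}$. Then $2\,\mathrm{Re}(\mu)\, v^{*}Sv=-v^{*}\Sigma v<0$, so $\mathrm{Re}\,\mu<0$. Hence $(A+J,D,\Sigma)\in\Theta$. Proposition~\ref{prop:inv} then gives $\Phi(A+J,D,\Sigma)=\Phi(A,D,\Sigma)$ with $A+J\neq A$, so $A$ is not identifiable.

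Finally, once $A$ is uniquely determined, $A_s=\tfrac12(A+A^{T})$ and $A_a=\tfrac12(A-A^{T})$ are determined as linear functions of it. Splitting a competing perturbation as $J=J_s+J_a$ recovers the special cases in Proposition~\ref{prop:inv}: a nonzero symmetric perturbation is never invisible, while an antisymmetric one is invisible exactly when it commutes with all $S(k)$. The main obstacle is the admissibility check in the converse, and the Lyapunov-based stability argument handles it.
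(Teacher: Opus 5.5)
Your proposal is correct and follows the paper's proof: both directions reduce equality of structure factors to \eqref{eq:inv} by subtracting the Lyapunov equations, and in the converse you get admissibility of $A+J$ from Lyapunov's theorem, since the positive definite $S(k)$ solves the Lyapunov equation of $A+J-\kappa D$ with positive definite $\Sigma$. The detour through a small multiple $tJ$ is unnecessary, because, as you note yourself, the eigenvector argument already works at $t=1$.
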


\begin{proof}
Two matrices $A,A'$ produce the same $\{S(k)\}$ for a fixed $(D,\Sigma)$ if and only if $J:=A'-A$ satisfies equation \eqref{eq:inv}.

If $J$ solves \eqref{eq:inv} we have $(A+J,D,\Sigma) \in \Theta$ and the positive definite $S(k)$ solves the Lyapunov equation of $A+J-\kappa D$ with positive definite $\Sigma$. By Lyapunov's theorem, we receive that $A+J-\kappa D$ is Hurwitz for every $k$. Thus, $A$ is identifiable if and only if $J=0$ is the unique solution.
\end{proof}

In particular, a nonzero antisymmetric matrix $J$ that commutes with every $S(k)$ satisfies \eqref{eq:inv}. Then, $A_a$ is not identifiable. Corollary~\ref{cor:id} assumes that $(D,\Sigma)$ is known. Combined with the scaling of Proposition~\ref{prop:scale} it does not by itself guarantee that $(A,D,\Sigma)$ is jointly identifiable.

We now turn to the joint identifiability of $(A,D,\Sigma)$ and first describe the kernel of the differential of $\Phi$. Let $\theta_0=(A,D,\Sigma)\in\Theta$ and let $\delta\theta=(\delta A,\delta D,\delta\Sigma)$ denote a parameter perturbation, with $\delta A\in\mathbb{R}^{K\times K}$, $\delta D$ diagonal and $\delta\Sigma$ symmetric, such that $\theta_0+\varepsilon \delta\theta$ is again a parameter triple for $\varepsilon$ small. The differential $d\Phi|_{\theta_0}$ maps $\delta\theta$ to the first-order change $\delta S(\kappa)$ of the structure factor. Differentiating \eqref{eq:lyap} with respect to $\varepsilon$ close to $0$ gives
\begin{equation}
\begin{aligned}
&M(\kappa) \delta S(\kappa)+\delta S(\kappa) M(\kappa)^{T}+E(\kappa)=0,\\
&E(\kappa):=\delta A S(\kappa)+S(\kappa) \delta A^{T}-\kappa\bigl(\delta D S(\kappa)+S(\kappa) \delta D\bigr)+\delta\Sigma,
\label{eq:E}
\end{aligned}
\end{equation}
with $S=S_{\theta_0}$. Here, $E(\kappa)$ is the change of the left-hand side of \eqref{eq:lyap} as a consequence of the perturbation for fixed $S$. Since $X\mapsto M(\kappa)X+XM(\kappa)^{T}$ is injective for $M(\kappa)$ Hurwitz, $\delta S(\kappa)=0$ if and only if $E(\kappa)=0$. Therefore
\begin{equation}
\ker d\Phi|_{\theta_0} = \bigl\{(\delta A,\delta D,\delta\Sigma) : E(\kappa) = 0 \ \text{ for all } \kappa \ge 0 \bigr\}.
\label{eq:kernel}
\end{equation}
The scaling direction $\delta\theta=\theta_0$ is in the kernel: for $(\delta A,\delta D,\delta\Sigma)=(A,D,\Sigma)$ the function $E(\kappa)$ equals the left-hand side of \eqref{eq:lyap}. Thus, $\dim\ker d\Phi\ge1$ on $\Theta$. The following lemma shows that the condition in Equation \eqref{eq:kernel} reduces to finitely many linear equations in $(\delta A,\delta D,\delta\Sigma)$.

\begin{lemma}[Finite determination]
\label{lem:finite}
Let $\theta_0\in\Theta$, let $N_\ast:=\binom{K+1}{2}$, and let $p(\kappa)$ be the determinant of the linear map $L_\kappa: X\mapsto M(\kappa)X+XM(\kappa)^{T}$ on $\mathrm{Sym}(K)$ of symmetric $K\times K$ matrices. For every perturbation $(\delta A,\delta D,\delta\Sigma)$ the product $p(\kappa)E(\kappa)$ is a polynomial in $\kappa$ of degree less or equal $N_\ast$, and the following three statements are equivalent: (i) $E(\kappa)=0$ for all $\kappa\ge0$; (ii) $E(\kappa_j)=0$ at $N_\ast+1$ pairwise distinct values $\kappa_0,\dots,\kappa_{N_\ast}\ge0$; (iii) $E_0=\dots=E_{N_\ast}=0$, where $E(\kappa)=\sum_{n\ge0}E_n\kappa^n$ is the power series of $E$ at $\kappa=0$. The coefficients are $E_0=\delta A S_0+S_0 \delta A^{T}+\delta\Sigma$ and $E_n=\delta A S_n+S_n \delta A^{T}-(\delta D S_{n-1}+S_{n-1} \delta D)$ for $n\ge1$, where $S(\kappa)=\sum_{n\ge0}S_n\kappa^n$ and the matrices $S_n$ are recursively available by $AS_0+S_0A^{T}=-\Sigma$ and $AS_n+S_nA^{T}=DS_{n-1}+S_{n-1}D$ for $n\ge1$.
\end{lemma}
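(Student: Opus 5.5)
The plan is to use Cramer's rule on the Lyapunov operator to exhibit $S(\kappa)$ as a rational function of $\kappa$ with denominator $p(\kappa)$. From this, $p(\kappa)E(\kappa)$ is a matrix polynomial of controlled degree. All three equivalences then follow from two facts: a polynomial is determined by finitely many values, and it is also determined by finitely many Taylor coefficients.

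\emph{Step 1: rational structure of $S$.} Fix a basis of $\mathrm{Sym}(K)$, which has dimension $N_\ast$. In this basis $L_\kappa$ is an $N_\ast\times N_\ast$ matrix whose entries are affine in $\kappa$, because $M(\kappa)=A-\kappa D$. Hence $p(\kappa)=\det L_\kappa$ is a polynomial of degree at most $N_\ast$.

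The eigenvalues of $L_\kappa$ are the sums $\nu_i+\nu_j$ of eigenvalues of $M(\kappa)$. By Assumption~\ref{ass:stab} these have negative real part, so $p(\kappa)\neq0$ for every $\kappa\ge0$; this is the same injectivity already used to derive \eqref{eq:kernel}. By Lemma~\ref{lem:sym} and \eqref{eq:lyap}, $S(\kappa)=-L_\kappa^{-1}\Sigma$. Cramer's rule then gives $S(\kappa)=-\operatorname{adj}(L_\kappa)\Sigma/p(\kappa)$. The entries of the adjugate are minors of size $N_\ast-1$, so $p(\kappa)S(\kappa)$ is a matrix polynomial of degree at most $N_\ast-1$.

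Multiply the definition of $E(\kappa)$ in \eqref{eq:E} by $p$. The terms $\delta A\,(pS)+(pS)\,\delta A^{T}$ have degree at most $N_\ast-1$. The term $-\kappa\bigl(\delta D\,(pS)+(pS)\,\delta D\bigr)$ has degree at most $N_\ast$. The term $p\,\delta\Sigma$ has degree at most $N_\ast$. This proves the degree claim.

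\emph{Step 2: (i)$\Leftrightarrow$(ii).} The direction (i)$\Rightarrow$(ii) is trivial. For the converse, each entry of $pE$ is a polynomial of degree at most $N_\ast$ that vanishes at $N_\ast+1$ distinct points, so $pE\equiv0$. Since $p(\kappa)\neq0$ on $\kappa\ge0$, this gives $E(\kappa)=0$ for all $\kappa\ge0$.

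\emph{Step 3: power series and (i)$\Leftrightarrow$(iii).} Since $p(0)\neq0$, $S$ is a rational function analytic in a neighbourhood of $\kappa=0$, so it has a convergent expansion $S=\sum_n S_n\kappa^n$. Insert this into $(A-\kappa D)S+S(A-\kappa D)^{T}+\Sigma=0$ and compare coefficients of $\kappa^n$. This yields exactly the stated recursion. It is uniquely solvable because $A=M(0)$ is Hurwitz, so $L_0$ is invertible.

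Inserting the same series into \eqref{eq:E} and collecting powers of $\kappa$ gives the stated formulas for $E_0$ and for $E_n$ with $n\ge1$. Now, (i) says that $E$ vanishes on $[0,\infty)$, an interval containing a right-neighbourhood of $0$. The function $E$ is analytic there, so all $E_n$ vanish, and in particular (iii) holds.

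Conversely, assume (iii). The Taylor series of $pE$ at $0$ is the Cauchy product of the polynomial $p$ with $\sum_n E_n\kappa^n$. Its coefficients of order $\le N_\ast$ involve only $E_0,\dots,E_{N_\ast}$, so they vanish. By Step 1, $pE$ is a polynomial of degree at most $N_\ast$, so $pE\equiv0$, and hence $E\equiv0$ on $\kappa\ge0$ as in Step 2.

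\emph{Expected obstacle.} The only delicate point is the degree bookkeeping in Step 1. The bound must be $N_\ast$ and not $N_\ast+1$, which is what makes $N_\ast+1$ points and the coefficients $E_0,\dots,E_{N_\ast}$ suffice. This depends on noticing that the adjugate contributes degree at most $N_\ast-1$, so that the extra factor $\kappa$ in the $\delta D$ term raises the degree only to $N_\ast$. It also depends on $p$ itself having degree at most $N_\ast$. I would therefore make the choice of basis of $\mathrm{Sym}(K)$ explicit, so that the affine dependence of $L_\kappa$ on $\kappa$, and hence both degree bounds, are transparent.
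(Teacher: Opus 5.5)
Your proposal is correct and takes the same route as the paper. You apply Cramer's rule to $L_\kappa$ in a basis of $\mathrm{Sym}(K)$ to get $\deg(pS)\le N_\ast-1$ and hence $\deg(pE)\le N_\ast$, then use polynomial interpolation for (ii) and the vanishing low-order Taylor coefficients of $pE$ for (iii); the only differences are cosmetic, namely that you justify $p(\kappa)\neq0$ via the eigenvalues $\nu_i+\nu_j$ and phrase (iii)$\Rightarrow$(i) as a Cauchy product rather than as $pE=O(\kappa^{N_\ast+1})$.
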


\begin{proof}
In a basis of $\mathrm{Sym}(K)$ that has dimension $N_\ast$, the map $L_\kappa$ is an $N_\ast\times N_\ast$ matrix whose entries are affine functions of $\kappa$, since $M(\kappa)=A-\kappa D$. Hence $p(\kappa)=\det L_\kappa$ is a polynomial of degree less or equal $N_\ast$, and the entries of $\mathrm{adj}(L_\kappa)$ are polynomials of degree less or equal $N_\ast-1$. Both bounds use only the dimension of $\mathrm{Sym}(K)$ and they also hold if some $D_i=0$. By Assumption~\ref{ass:stab}, $L_\kappa$ is invertible for every $\kappa\ge0$, so $p(\kappa)\ne0$ there, and Cramer's rule gives $S(\kappa)=\mathcal{N}(\kappa)/p(\kappa)$, where $\mathcal{N}(\kappa)\in\mathrm{Sym}(K)$ is obtained by applying $\mathrm{adj}(L_\kappa)$ to $-\Sigma$ and has entries that are polynomials of degree less or equal $N_\ast-1$. Multiplying $E$ by $p$ gives
\[
p(\kappa)E(\kappa)=\delta A \mathcal{N}(\kappa)+\mathcal{N}(\kappa) \delta A^{T}-\kappa\bigl(\delta D \mathcal{N}(\kappa)+\mathcal{N}(\kappa) \delta D\bigr)+p(\kappa) \delta\Sigma,
\]
and every term on the right is a polynomial of degree less or equal $N_\ast$. This proves the first claim and in particular, $E$ is a rational function of $\kappa$ without poles on $[0,\infty)$.
(i) implies (ii) trivially. If (ii) holds, the polynomial $pE$, of degree less or equal $N_\ast$, vanishes at $N_\ast+1$ distinct points, so it is the zero polynomial, and $E=0$ wherever $p\ne0$, in particular on $[0,\infty)$. This yields (i). For (iii), note that $E$ is analytic in a neighborhood of $\kappa=0$ with a power series of positive radius of convergence, since $p(0)\ne0$. If (i) holds, then $E$ and all coefficients $E_n$ vanish. Conversely, if $E_0=\dots=E_{N_\ast}=0$, then $E(\kappa)=O(\kappa^{N_\ast+1})$ near $\kappa=0$, thus $p(\kappa)E(\kappa)=O(\kappa^{N_\ast+1})$. A polynomial of degree less or equal $N_\ast$ with a zero of order $N_\ast+1$ at the origin is zero, so $E\equiv0$. The formulas for $E_n$ and $S_n$ follow by inserting $S(\kappa)=\sum_n S_n\kappa^n$ into the definition of $E$ and into \eqref{eq:lyap} and comparing the coefficients of $\kappa^n$.
\end{proof}

\begin{lemma}[Genericity of minimal kernel]
\label{lem:generic}
There is a polynomial $q$ on $\mathbb{R}^{n_U}$ such that for $\theta\in\Theta$
\[
\dim\ker d\Phi|_{\theta}=1 \quad\text{if and only if}\quad q(\theta)\neq0 .
\]
Consequently, if $\dim\ker d\Phi|_{\theta}=1$ at one point of $\Theta$, then $q$ is not the zero polynomial and $\dim\ker d\Phi|_{\theta}=1$ holds for generic $\theta\in\Theta$ in the sense of Definition~\ref{def:generic}.
\end{lemma}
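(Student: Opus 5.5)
By Lemma~\ref{lem:finite}, the kernel of $d\Phi|_\theta$ is the null space of a finite linear system. The plan is to write this system as a matrix whose entries are rational in $\theta$ with a nonvanishing denominator on $\Theta$. A rank condition on that matrix then gives the polynomial $q$ as a sum of squares of minors.

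\emph{Setting up the matrix.} Using characterization (iii), the kernel \eqref{eq:kernel} equals the set of $\delta\theta=(\delta A,\delta D,\delta\Sigma)\in\mathbb{R}^{n_U}$ with $E_0=\dots=E_{N_\ast}=0$. Each $E_n$ is linear in $\delta\theta$, with coefficients built from $S_0,\dots,S_{N_\ast}$. Stacking the entries of the symmetric matrices $E_n$ gives a matrix $\mathcal{M}(\theta)$ of size $(N_\ast+1)N_\ast\times n_U$, with $\ker d\Phi|_\theta=\ker\mathcal{M}(\theta)$.

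\emph{Rationality of the entries.} The $S_n$ come from the recursion $L_0S_0=-\Sigma$, $L_0S_n=DS_{n-1}+S_{n-1}D$, where $L_0:X\mapsto AX+XA^T$ on $\mathrm{Sym}(K)$. By Cramer's rule, $S_n=\mathcal{P}_n(\theta)/p_0(\theta)^{n+1}$, with $p_0=\det L_0$ polynomial in $A$ and $\mathcal{P}_n$ polynomial in $\theta$. By Assumption~\ref{ass:stab} at $\kappa=0$, $A$ is Hurwitz, so $p_0(\theta)\neq0$ on $\Theta$. Multiplying every row by $p_0^{N_\ast+1}$ does not change the kernel on $\Theta$. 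So I replace $\mathcal{M}$ by $\tilde{\mathcal{M}}(\theta)$, whose entries are polynomials in $\theta$.

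\emph{Rank condition.} Since $\theta\in\ker d\Phi|_\theta$, we always have $\dim\ker\ge1$, so $\operatorname{rank}\tilde{\mathcal{M}}(\theta)\le n_U-1$ on $\Theta$. Hence $\dim\ker=1$ exactly when $\operatorname{rank}\tilde{\mathcal{M}}(\theta)=n_U-1$, that is, when some $(n_U-1)$-minor is nonzero. I would set
\[
q(\theta):=\sum_{\text{minors } m} m(\theta)^2 .
\]
This is a polynomial on $\mathbb{R}^{n_U}$, and for $\theta\in\Theta$ it is nonzero if and only if $\dim\ker d\Phi|_\theta=1$. Summing squares avoids choosing a particular minor.

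\emph{Consequence.} If $\dim\ker=1$ at one point $\theta^\ast\in\Theta$, then $q(\theta^\ast)\neq0$, so $q$ is not the zero polynomial. Definition~\ref{def:generic} then gives genericity directly.

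The main obstacle is bookkeeping: checking that clearing denominators introduces no spurious rank drop. This holds because the rescaling factor $p_0$ never vanishes on $\Theta$. One could also use (ii) with fixed distinct $\kappa_j$ and $p(\kappa_j)$ as denominators, but (iii) avoids choosing points.
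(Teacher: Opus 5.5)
Your proposal is correct and follows the paper's proof essentially step for step. You characterize the kernel via Lemma~\ref{lem:finite}(iii) as the null space of a coefficient matrix whose entries are rational in $\theta$, with powers of $\det L_0\neq0$ as denominators. You then clear denominators, bound the rank by $n_U-1$ using the scaling direction, and take $q$ as the sum of squares of the $(n_U-1)$-minors. Your explicit factor $p_0^{N_\ast+1}$, coming from $S_n$ having denominator $p_0^{n+1}$, simply makes concrete the paper's ``sufficiently high power of $\det L_0$''.
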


\begin{proof}
By \eqref{eq:kernel} and Lemma~\ref{lem:finite}(iii), $\ker d\Phi|_{\theta}$ is the solution space of the finite linear system $E_0=\dots=E_{N_\ast}=0$ in the unknown $(\delta A,\delta D,\delta\Sigma)\in\mathbb{R}^{n_U}$. Let $C(\theta)$ be its coefficient matrix, such that $\dim\ker d\Phi|_{\theta}=n_U-\mathrm{rank} C(\theta)$. The entries of $C(\theta)$ are linear in the entries of $S_0,\dots,S_{N_\ast}$, and these are received from the recursion of Lemma~\ref{lem:finite} through solving $N_\ast+1$ Lyapunov equations with $L_0: X\mapsto AX+XA^{T}$. By Cramer's rule these are rational functions of $\theta$ and their denominators are powers of $\det L_0$. Since $A=M(0)$ is Hurwitz on $\Theta$ we have $\det L_0\neq0$, and multiplying $C(\theta)$ by a sufficiently high power of $\det L_0$ yields a matrix $\tilde C(\theta)$ with polynomial entries and $\mathrm{rank} \tilde C(\theta)=\mathrm{rank} C(\theta)$ for all $\theta\in\Theta$. The scaling direction $\theta$ solves the system, and $\mathrm{rank} C(\theta)\le n_U-1$ on $\Theta$, with equality exactly when $\dim\ker d\Phi|_{\theta}=1$. Let $q(\theta)$ be the sum of the squares of all minors of size $n_U-1$ of $\tilde C(\theta)$. Then $q$ is a polynomial on $\mathbb{R}^{n_U}$, and $q(\theta)\neq0$ if and only if at least one of these minors is nonzero, i.e., if and only if $\mathrm{rank} \tilde C(\theta)\ge n_U-1$. For $\theta\in\Theta$ this is exactly the case when $\dim\ker d\Phi|_{\theta}=1$. If this happens at one point of $\Theta$, then $q$ cannot be the zero polynomial, and Definition~\ref{def:generic} applies.
\end{proof}

\begin{theorem}[Generic identifiability up to the time scale]
\label{thm:main}
At a generic $(A,D,\Sigma)\in\Theta$ and for every number $K \ge 1$ of populations, the differential of the observation map $\Phi$ has a one-dimensional kernel that is spanned by the scaling direction of Proposition~\ref{prop:scale}. Thus, $(A,D,\Sigma)$ is locally identifiable from $\{S(\kappa)\}$ up to that rescaling. If $\Sigma$ is known in addition, the kernel is trivial and the scale becomes identifiable.
\end{theorem}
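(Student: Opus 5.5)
By Lemma~\ref{lem:generic}, the theorem reduces to one task: exhibit, for each $K\ge1$, a single point $\theta^\ast\in\Theta$ with $\dim\ker d\Phi|_{\theta^\ast}=1$. Once that point exists, the polynomial $q$ is nonzero and the kernel is one-dimensional at generic $\theta$. It then equals $\mathrm{span}\{\theta\}$, because the scaling direction always lies in the kernel (shown before Lemma~\ref{lem:finite}). This is local identifiability up to the time scale in the sense of Definition~\ref{def:ident}.

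For the statement with known $\Sigma$, restrict to perturbations with $\delta\Sigma=0$. The kernel is then $\ker d\Phi|_\theta\cap\{\delta\Sigma=0\}$. At generic $\theta$ this intersection is $\mathrm{span}\{\theta\}\cap\{\delta\Sigma=0\}=\{0\}$, since the scaling direction has $\delta\Sigma=\Sigma\neq0$. So the kernel is trivial there.

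To construct the witness point, I would work with the coefficient system $E_0=\dots=E_{N_\ast}=0$ of Lemma~\ref{lem:finite}(iii). The point should make the recursion for $S_n$ explicit, so I would try $A=-aI+\epsilon B$, with $B$ a generic matrix (say with distinct, non-symmetric off-diagonal pattern). I would take $D$ with pairwise distinct positive entries, $\Sigma=I$ or a generic diagonal, and $\epsilon$ small.

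I would first analyze the base point $\epsilon=0$. There everything is diagonal: $S_0=\Sigma/(2a)$, and $S_n$ is a polynomial in $D$ times $\Sigma$, so all $S_n$ commute with $D$ and with each other. The equations $E_n=0$ then reduce to linear conditions on $\delta A$ and $\delta D$. The expected result is a kernel of dimension larger than one, coming from antisymmetric $\delta A$ that commute with all $S_n$ (Proposition~\ref{prop:inv}) and from rescalings. That degeneracy is why $\epsilon\neq0$ is needed.

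The main step is to show that the kernel dimension drops to one under the first-order perturbation in $\epsilon$. This amounts to a rank computation for $\tilde C(\theta)$, which is upper semicontinuous, so it suffices to show full rank $n_U-1$ for one small $\epsilon\neq0$. I expect this to be the main obstacle, since a clean argument for all $K$ is delicate.

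If the perturbative route gets messy, I would fall back on a direct witness: $A$ lower triangular with distinct diagonal entries plus one feedback entry, $D$ and $\Sigma$ diagonal generic. I would then solve $E_0=0$ for $\delta\Sigma$ in terms of $\delta A$ and substitute into $E_1,E_2,\dots$. The aim is to show entrywise, by induction on $K$ (embedding the $(K-1)$-population witness as a block), that the only solutions are multiples of $\theta$.

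For small $K$, a symbolic rank computation at an explicit rational point would also confirm the claim. The written proof, however, should use the inductive block argument so that it covers all $K$.
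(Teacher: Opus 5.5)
Your reduction matches the paper's: by Lemma~\ref{lem:generic} it suffices to exhibit one point of $\Theta$ with $\dim\ker d\Phi=1$, and the scaling direction then spans the kernel. Your known-$\Sigma$ step is also fine, since $\delta\Sigma=0$ excludes the scaling direction, whose $\Sigma$-component is $\Sigma\neq0$. The gap is that you never produce that point, and that point is the whole content of the proof. Neither route is carried out: you call the first-order step ``the main obstacle'', the triangular/inductive fallback is not specified, and a symbolic check only covers small $K$. The rank of $\tilde C$ is lower, not upper, semicontinuous, and ``full rank at one small $\epsilon\neq0$'' only restates the original task.

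Your diagnosis of the base point is also off. For $A=-aI$, diagonal $\Sigma$ and distinct $d_i$, one has $S_{ii}(\kappa)=\Sigma_{ii}/(2(a+\kappa d_i))$ with $S_{ii}\not\equiv S_{jj}$. So no nonzero antisymmetric matrix commutes with all $S_n$. The kernel there is $\{(\mathrm{diag}(-af_i),\mathrm{diag}(d_if_i),\mathrm{diag}(\Sigma_{ii}f_i))\}$, one independent time scale per population. This happens because diagonal $A$ together with diagonal $\Sigma$ decouples the system, which is the decoupled case of Remark~\ref{rem:degen}.

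The missing idea follows from this: the degeneracy comes from the diagonal $\Sigma$, not from the diagonal $A$. The paper therefore couples the populations through the noise rather than the drift, and needs no perturbation theory or induction.
\begin{itemize}
\item It keeps $A=\mathrm{diag}(a_i)$, so that $S_{ij}(\kappa)=-\Sigma_{ij}/\bigl((a_i+a_j)-\kappa(d_i+d_j)\bigr)$ is explicit.
\item It takes $\Sigma$ with all entries nonzero.
\item It chooses $a_i<0<d_i$ so that the $\tbinom{K+1}{2}$ poles $\kappa_{ij}=(a_i+a_j)/(d_i+d_j)$ are pairwise distinct, e.g.\ $d_i=1$, $a_i=-2^i$.
\item By Lemma~\ref{lem:finite}, $E$ vanishes identically as a rational function. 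For $l\neq i$ the residue $2\delta A_{il}\Sigma_{li}/(d_l+d_i)$ of $E_{ii}$ at $\kappa_{li}$ must vanish, so $\delta A$ is diagonal.
\item Comparing coefficients in the remaining entries gives $\delta A_{ii}=a_if_i$, $\delta D_i=d_if_i$ and $(f_i-f_j)(a_id_j-a_jd_i)=0$. Since the $a_i/d_i$ are distinct, $f_i=c$ for all $i$ and $\delta\Sigma=c\Sigma$, for every $K$.
\end{itemize}

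Your perturbative route can be closed by the same residue mechanism, but you would have to carry it out. At order $\epsilon$, the off-diagonal $(i,j)$ equation has a pole at $-2a/(d_i+d_j)$ with residue $(f_i-f_j)(B_{ij}\Sigma_{jj}-B_{ji}\Sigma_{ii})/(2(d_i+d_j))$. The correction terms have poles only at $-a/d_i$ and $-a/d_j$, so they cannot cancel it. This forces $f_i=f_j$ wherever $B\Sigma$ fails to be symmetric, consistent with the reversible case of Remark~\ref{rem:degen}. A connected such pattern, plus a limiting argument on unit kernel vectors orthogonal to $\theta$ as $\epsilon\to0$, would then give the one-dimensional kernel.
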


\begin{proof}
By Lemma~\ref{lem:generic} it suffices to find one point of $\Theta$ with $\dim\ker d\Phi=1$. We take a point at which $A$ is diagonal, so that the Lyapunov equation decouples into scalar equations. Let $A=\mathrm{diag}(a_1,\dots,a_K)$ and $D=\mathrm{diag}(d_1,\dots,d_K)$ with $a_i<0$ and $d_i>0$, chosen such that
\begin{equation}
\text{the } \tbinom{K+1}{2} \text{ numbers } \kappa_{ij}:=\frac{a_i+a_j}{d_i+d_j}, \ i\le j,
\ \text{are pairwise distinct,}
\label{eq:polesep}
\end{equation}
and let $\Sigma$ be symmetric positive definite with nonzero entries. Such points exist for every $K$: the choice $d_i=1$, $a_i=-2^{i}$ satisfies \eqref{eq:polesep}, since $\kappa_{ij}=-(2^{i}+2^{j})/2$ and the binary expansion of $2^{i}+2^{j}$ determines the pair $\{i,j\}$ (for $i=j$ the number is $2^{i+1}$). $\Sigma=(1-\varepsilon)I+\varepsilon \mathbf{1}\mathbf{1}^{T}$ with $0<\varepsilon<1$ and the vector of ones $\mathbf{1}$ has all entries nonzero and the positive eigenvalues $1+\varepsilon(K-1)$ and $1-\varepsilon$ (for $K=1$ read $\Sigma=1$). Condition \eqref{eq:polesep} is stronger than requiring distinct ratios $a_i/d_i$: for example for $a=(-1,-2,-3)$ and $d=(1,1,1)$ the ratios are distinct, but $\kappa_{13}=\kappa_{22}=-2$. $A_a=0$ is not relevant for our purpose, since our claim is about the rank of $d\Phi$.

At that point $M(\kappa)$ is diagonal and \eqref{eq:lyap} decouples for each entry:
\[
S_{ij}(\kappa) = \frac{-\Sigma_{ij}}{\mathcal{D}_{ij}(\kappa)}, \qquad
\mathcal{D}_{ij}(\kappa) = (a_i+a_j) - \kappa(d_i+d_j).
\]
Each entry has a single simple pole at $\kappa_{ij}<0$, and these poles are pairwise distinct by \eqref{eq:polesep}. Let $(\delta A,\delta D,\delta\Sigma)\in\ker d\Phi$. By Lemma~\ref{lem:finite}, $E(\kappa)=0$ for $\kappa\ge0$ implies that $E$ vanishes identically as a rational function, and in particular every residue of every entry of $E$ vanishes. Consider now the diagonal entry $E_{ii}(\kappa)$. Among its terms, only $\delta A_{il}S_{li}(\kappa)+S_{il}(\kappa)\delta A_{il}=2\delta A_{il}S_{il}(\kappa)$ (using $S=S^{T}$) has a pole at $\kappa_{li}$ for $l\ne i$, and
\[
\operatorname*{Res}_{\kappa=\kappa_{li}} E_{ii}
= \frac{2\delta A_{il}\Sigma_{li}}{d_l+d_i}.
\]
Since $\Sigma_{li}\ne0$, this yields $\delta A_{il}=0$. Letting $l\ne i$ and $i$ range over all indices shows that $\delta A$ is diagonal. This step uses only that for fixed $i$ the $K$ poles $\kappa_{1i},\dots,\kappa_{Ki}$ are distinct.

With $\delta A$ diagonal, the $(i,j)$ entry of $E$ for $i\ne j$ reads $\big((\delta A_{ii}+\delta A_{jj})-\kappa(\delta D_i+\delta D_j)\big)S_{ij}(\kappa)+\delta\Sigma_{ij}=0$. Multiply by $\mathcal{D}_{ij}(\kappa)$ to receive an identity between polynomials of degree one in $\kappa$, and compare coefficients for
\[
\delta\Sigma_{ij} = \Sigma_{ij}\frac{\delta D_i+\delta D_j}{d_i+d_j}, \qquad
\delta A_{ii}+\delta A_{jj} = (a_i+a_j)\frac{\delta D_i+\delta D_j}{d_i+d_j}.
\]
The $(i,i)$ entry gives in the same way $\delta A_{ii}=a_i\delta D_i/d_i$ and $\delta\Sigma_{ii}=\Sigma_{ii}\delta D_i/d_i$. Writing $f_i := \delta D_i/d_i$, the off-diagonal relation becomes $a_i f_i + a_j f_j = (a_i+a_j)(d_i f_i + d_j f_j)/(d_i+d_j)$, which simplifies to $(f_i-f_j)(a_i d_j - a_j d_i)=0$. By \eqref{eq:polesep} the values $\kappa_{ii}=a_i/d_i$ are pairwise distinct, so $a_id_j\ne a_jd_i$ and $f_i=f_j=:c$ for all $i,j$. Then $\delta D=cD$, $\delta A=cA$ and $\delta\Sigma=c\Sigma$, i.e., $\ker d\Phi=\mathrm{span}\{(A,D,\Sigma)\}$ at this point, and by Lemma~\ref{lem:generic} at generic points of $\Theta$. If $\Sigma$ is known, $\delta\Sigma=0$ forces $c=0$, which yields scale identifiability.
\end{proof}

The next lemma extends the local statement to a global statement. It uses that for fixed $S$, the Lyapunov equation is linear in the parameters.

\begin{lemma}[Inverse image is affine]
\label{lem:inverseimage}
For $\theta_0\in\Theta$,
\[
\Phi(\theta) = \Phi(\theta_0)
\quad \text{ if and only if } \quad
\theta-\theta_0 \in \ker d\Phi|_{\theta_0},
\qquad \theta\in\Theta .
\]
If $\ker d\Phi|_{\theta_0}=\mathrm{span}\{\theta_0\}$, then the inverse image of $\theta_0$ is exactly the half-line: $\Phi^{-1}(\Phi(\theta_0)) = \{s\theta_0 : s>0\}$.
\end{lemma}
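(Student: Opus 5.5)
The key observation is that, with $S$ held fixed, the Lyapunov equation \eqref{eq:lyap} is \emph{linear} in the parameter triple. Write $\mathcal{L}_S(\theta)(\kappa):=(A-\kappa D)S(\kappa)+S(\kappa)(A-\kappa D)^{T}+\Sigma$ for $\theta=(A,D,\Sigma)$ and a fixed family $S(\cdot)$. Comparing with \eqref{eq:E}, the function $E(\kappa)$ built from a perturbation $\delta\theta$ and $S=S_{\theta_0}$ is exactly $\mathcal{L}_{S_{\theta_0}}(\delta\theta)(\kappa)$. By \eqref{eq:kernel}, $\ker d\Phi|_{\theta_0}=\{\delta\theta:\mathcal{L}_{S_{\theta_0}}(\delta\theta)\equiv0 \text{ on } \kappa\ge0\}$. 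The plan is to prove both implications directly from this identity, using uniqueness of the Lyapunov solution for Hurwitz $M(\kappa)$ (Assumption~\ref{ass:stab}).

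For ($\Rightarrow$), suppose $\theta\in\Theta$ and $S_\theta=S_{\theta_0}=:S$. Then $\mathcal{L}_S(\theta)\equiv0$ and $\mathcal{L}_S(\theta_0)\equiv0$, since both are Lyapunov identities. By linearity, $\mathcal{L}_S(\theta-\theta_0)\equiv0$, so $\theta-\theta_0\in\ker d\Phi|_{\theta_0}$. The difference $\theta-\theta_0$ is an admissible perturbation direction: $\delta A$ is arbitrary, $\delta D$ is diagonal and $\delta\Sigma$ is symmetric, which is all that \eqref{eq:E} requires.

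For ($\Leftarrow$), suppose $\theta\in\Theta$ and $\theta-\theta_0\in\ker d\Phi|_{\theta_0}$. Linearity gives $\mathcal{L}_{S_{\theta_0}}(\theta)=\mathcal{L}_{S_{\theta_0}}(\theta_0)+\mathcal{L}_{S_{\theta_0}}(\theta-\theta_0)\equiv0$, so $S_{\theta_0}(\kappa)$ solves the Lyapunov equation of $\theta$ at every $\kappa\ge0$. Because $\theta\in\Theta$, each $M_\theta(\kappa)$ is Hurwitz, so $X\mapsto M_\theta X+XM_\theta^{T}$ is injective and the solution is unique. Hence $S_\theta=S_{\theta_0}$, i.e.\ $\Phi(\theta)=\Phi(\theta_0)$. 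The only care point is that membership $\theta\in\Theta$ is assumed, not deduced. Alternatively, as in Corollary~\ref{cor:id}, positive definiteness of $S_{\theta_0}$ and $\Sigma$ gives the Hurwitz property via Lyapunov's theorem, although this is not needed here.

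For the half-line statement, assume $\ker d\Phi|_{\theta_0}=\mathrm{span}\{\theta_0\}$. By the equivalence, the inverse image is $\{\theta_0+t\theta_0=(1+t)\theta_0\}\cap\Theta=\{s\theta_0:s\in\mathbb{R}\}\cap\Theta$. If $s>0$, then $s\theta_0\in\Theta$: $sA-\kappa sD=sM(\kappa)$ is Hurwitz, $sD\ge0$ is diagonal, and $s\Sigma$ is positive definite, which is consistent with Proposition~\ref{prop:scale}. If $s\le0$, then $s\Sigma$ is not positive definite, so $s\theta_0\notin\Theta$. The resulting set is exactly $\{s\theta_0:s>0\}$. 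I expect no real obstacle. The only subtle step is the identification of $E$ with the linear map $\mathcal{L}_S$ applied to the perturbation, which makes the differential kernel coincide with a global fibre.
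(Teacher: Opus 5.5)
Your proposal is correct and follows the paper's proof: at fixed $S$ the Lyapunov equation is linear in $(A,D,\Sigma)$, subtracting or adding the two equations gives the two implications, and uniqueness of the solution for Hurwitz $M_\theta(\kappa)$ closes the converse. The one small difference is how you exclude $s<0$ in the half-line step. You use that $s\Sigma$ is then not positive definite, whereas the paper uses that $sA$ loses stability; both are valid under the stated definition of $\Theta$, but the paper's stability argument is what lets the lemma carry over to the Discussion's relaxation to non-positive-definite $\Sigma$.
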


\begin{proof}
Let $S:=\Phi(\theta_0)$. If $\Phi(\theta)=S$, then both $\theta$ and $\theta_0$ satisfy \eqref{eq:lyap} with the same $S(\kappa)$ for every $\kappa\ge0$. Subtracting the two results in relation \eqref{eq:kernel} of $\ker d\Phi|_{\theta_0}$ for $\theta-\theta_0$. Conversely, if $\theta-\theta_0\in\ker d\Phi|_{\theta_0}$, adding \eqref{eq:lyap} for $\theta_0$ shows that $S(\kappa)$ solves the Lyapunov equation for $\theta$. Since $M_\theta(\kappa)$ is Hurwitz, this solution is unique, and therefore $\Phi(\theta)=S$. Finally, $\theta=s\theta_0$ lies in $\Theta$ if and only if $s>0$. A negative $s<0$ reverses the spectrum of $A$ and destroys stability, and $s=0$ would violate that $\Sigma$ is positive definite.
\end{proof}

\begin{theorem}[Global identifiability up to the time scale]
\label{thm:global}
At a generic $\theta_0\in\Theta$ the inverse image $\Phi^{-1}(\Phi(\theta_0))\cap\Theta$ is exactly $\{s\theta_0:s>0\}$. Therefore, $(A,D,\Sigma)$ is globally identifiable up to a positive time rescaling. In particular, the signs and the ratios of the entries of the directed part $A_a$ are identifiable. If $\Sigma$ is known additionally, it is even globally identifiable.
\end{theorem}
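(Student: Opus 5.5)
The plan is to combine Lemma~\ref{lem:generic}, Theorem~\ref{thm:main} and Lemma~\ref{lem:inverseimage}. Theorem~\ref{thm:main} exhibits a point of $\Theta$ (diagonal $A$ satisfying \eqref{eq:polesep}, $\Sigma$ with nonzero entries) at which $\dim\ker d\Phi=1$. By Lemma~\ref{lem:generic}, the polynomial $q$ is therefore not identically zero, and $\dim\ker d\Phi|_{\theta_0}=1$ holds for every $\theta_0\in\Theta$ with $q(\theta_0)\neq0$. The scaling direction $\theta_0$ always lies in the kernel, as shown after \eqref{eq:kernel}. Since $\theta_0\neq0$ (because $\Sigma$ is positive definite), the one-dimensional kernel equals $\mathrm{span}\{\theta_0\}$ at such generic points.

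Next I apply Lemma~\ref{lem:inverseimage}. Its key ingredient is that the Lyapunov equation \eqref{eq:lyap} is linear in $\theta$ once $S$ is fixed, so $\theta-\theta_0$ satisfies $E(\kappa)=0$ exactly. It follows that $\Phi^{-1}(\Phi(\theta_0))\cap\Theta=\{\theta_0+t\theta_0\}\cap\Theta=\{s\theta_0:s>0\}$. To justify membership in $\Theta$: for $s>0$, the matrix $sM(\kappa)$ is Hurwitz, $sD\ge0$ and $s\Sigma$ is positive definite. For $s\le0$, positive definiteness of $\Sigma$ fails. This gives global identifiability up to time rescaling in the sense of Definition~\ref{def:ident}. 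The key point is that no genericity beyond $q(\theta_0)\neq0$ is required, because the affine structure turns the local statement into a global one without any additional argument.

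For the statement on $A_a$, observe that any $\theta$ in the inverse image equals $s\theta_0$. Hence $A'_a=sA_a$ with $s>0$, so each entry keeps its sign and ratios of entries are preserved. The same holds for the directed differences $A_{ij}-A_{ji}$ and for all ratios of entries of $A$ and $D$.

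If $\Sigma$ is known, only $s\theta_0$ with $s\Sigma=\Sigma$ is admissible, which forces $s=1$, so the inverse image reduces to $\{\theta_0\}$.

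The main obstacle is a subtlety about the kernel condition. I must make sure that the kernel in Lemma~\ref{lem:inverseimage} is taken with respect to all directions in $\mathbb{R}^{n_U}$ (diagonal $\delta D$, symmetric $\delta\Sigma$), matching the coefficient system used in Lemma~\ref{lem:generic}. I also need that $\theta-\theta_0$ has exactly this structure, which holds because $\Theta$ is contained in that linear space. Once this is checked, the argument reduces to chaining the lemmas.
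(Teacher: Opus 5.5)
Your proposal is correct and follows the paper's proof, which combines Theorem~\ref{thm:main} (itself resting on Lemma~\ref{lem:generic} and the diagonal reference point) with Lemma~\ref{lem:inverseimage}. Your only deviation is that you exclude all $s\le0$ through positive definiteness of $\Sigma$, whereas the paper excludes $s<0$ through loss of stability. Both arguments are valid here, but only the paper's stability argument still works when $\Sigma$ is merely symmetric, as in the Poisson-corrected setting of the Discussion.
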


\begin{proof}
Combine Theorem~\ref{thm:main} with Lemma~\ref{lem:inverseimage}.
\end{proof}

\begin{corollary}[Finitely many wavenumbers]
\label{cor:shells}
Let $\theta,\theta_0\in\Theta$ and let $\kappa_0,\dots,\kappa_{N_\ast}\ge0$ be $N_\ast+1=\binom{K+1}{2}+1$ pairwise distinct wavenumbers. If $S_\theta(\kappa_j)=S_{\theta_0}(\kappa_j)$ for all $j$, then we receive $S_\theta=S_{\theta_0}$ on $[0,\infty)$. For a generic $\theta_0$ equal structure factors at $\binom{K+1}{2}+1$ distinct wavenumbers implies $\theta=s\theta_0$ for $s>0$.
\end{corollary}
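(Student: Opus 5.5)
The plan is to reduce the corollary to Lemma~\ref{lem:finite} and Lemma~\ref{lem:inverseimage}. Equality of the structure factors at finitely many points turns into the vanishing, at those points, of the residual $E(\kappa)$ built from the parameter difference $\theta-\theta_0$. The finite determination lemma then propagates this to all $\kappa\ge0$.

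\textbf{Step 1: from equal structure factors to a vanishing residual.} Write $S(\kappa):=S_{\theta_0}(\kappa)$ and $\delta\theta:=\theta-\theta_0=(\delta A,\delta D,\delta\Sigma)$. At each $\kappa_j$ we have $S_\theta(\kappa_j)=S(\kappa_j)$. Hence $S(\kappa_j)$ solves both the Lyapunov equation \eqref{eq:lyap} for $\theta$ and the one for $\theta_0$. Subtracting the two equations at $\kappa=\kappa_j$, exactly as in the proof of Lemma~\ref{lem:inverseimage}, gives
\[
\delta A\, S(\kappa_j)+S(\kappa_j)\,\delta A^{T}-\kappa_j\bigl(\delta D\, S(\kappa_j)+S(\kappa_j)\,\delta D\bigr)+\delta\Sigma=0 .
\]
This is precisely $E(\kappa_j)=0$ for the function $E$ of \eqref{eq:E}, built from $S=S_{\theta_0}$ and the perturbation $\delta\theta$. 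No smallness of $\delta\theta$ is needed, because $E$ is linear in $\delta\theta$ and only the identity itself is used.

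\textbf{Step 2: propagate to all wavenumbers.} Lemma~\ref{lem:finite} applies to every perturbation $(\delta A,\delta D,\delta\Sigma)$ at the point $\theta_0\in\Theta$; the only requirement is that $M_{\theta_0}(\kappa)$ be Hurwitz, so $p(\kappa)\neq0$. Its implication (ii)$\Rightarrow$(i) therefore gives $E(\kappa)=0$ for all $\kappa\ge0$. By \eqref{eq:kernel} this means $\theta-\theta_0\in\ker d\Phi|_{\theta_0}$. Lemma~\ref{lem:inverseimage} then yields $\Phi(\theta)=\Phi(\theta_0)$, i.e.\ $S_\theta=S_{\theta_0}$ on $[0,\infty)$. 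The only point to check carefully here is that the uniqueness argument in that lemma uses only that $\theta\in\Theta$, so that $M_\theta(\kappa)$ is Hurwitz for all $\kappa$; this is assumed.

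\textbf{Step 3: the generic statement.} For generic $\theta_0$, Theorem~\ref{thm:main} (via Lemma~\ref{lem:generic}) gives $\ker d\Phi|_{\theta_0}=\mathrm{span}\{\theta_0\}$. Combined with Step 2, this gives $\theta-\theta_0=t\theta_0$, so $\theta=s\theta_0$ with $s=1+t$. Membership $\theta\in\Theta$ then forces $s>0$, by the last part of Lemma~\ref{lem:inverseimage}; equivalently, this is Theorem~\ref{thm:global}. The step needing most care is Step 1: the finite determination lemma is phrased for the linearized residual, and one must see that the nonlinear equality of $S_\theta$ and $S_{\theta_0}$ at finitely many points produces exactly that residual. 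This holds because the Lyapunov equation is affine in $\theta$ for fixed $S$.
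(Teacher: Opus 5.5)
Your proposal is correct and matches the paper's proof step for step. You subtract the two Lyapunov equations at each $\kappa_j$ to obtain $E(\kappa_j)=0$, extend this to all $\kappa\ge0$ via Lemma~\ref{lem:finite}, conclude $S_\theta=S_{\theta_0}$ via Lemma~\ref{lem:inverseimage}, and get the generic half-line statement from Theorem~\ref{thm:global} (equivalently, from the one-dimensional kernel).
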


\begin{proof}
Let $E$ be the function as in equation \eqref{eq:E} with $S=S_{\theta_0}$ and $(\delta A,\delta D,\delta\Sigma)=\theta-\theta_0$. For each $\kappa_j$, subtracting the Lyapunov equations of $\theta_0$ and $\theta$ (which share the matrix $S_{\theta_0}(\kappa_j)$) gives $E(\kappa_j)=0$. By Lemma~\ref{lem:finite}, $E=0$ on $[0,\infty)$, thus $\theta-\theta_0\in\ker d\Phi|_{\theta_0}$, and from Lemma~\ref{lem:inverseimage} we receive $\Phi(\theta)=\Phi(\theta_0)$. For the second claim, let $\theta_0$ be generic. Then $\theta$ lies in the inverse image of $\theta_0$, which is the half-line $\{s\theta_0:s>0\}$ by Theorem~\ref{thm:global}.
\end{proof}

\begin{remark}[Where identifiability gets lost]
\label{rem:degen}
Theorem~\ref{thm:global} holds at generic points, but there is a set of exceptions. The first is full isotropy. For $A_s=\alpha I$, $D=dI$ and $\Sigma=\sigma I$ the family is $S(\kappa)=s(\kappa)I$ with $s(\kappa)=\sigma/(2(\kappa d-\alpha))$, for any $A_a$, and $E=0$ reduces to $\delta\Sigma=(\sigma/d)\delta D$ and $\delta A_s=(\alpha/d)\delta D$ with $\delta D$ diagonal and free $\delta A_a$. The kernel has dimension $K(K+1)/2$ and $A_a$ is lost. Equal motilities $D_i$ are not alone responsible for this behavior. On the set $D\propto I$, the kernel stays one-dimensional at generic points, since $q$ of Lemma~\ref{lem:generic} does not vanish identically there. The reference point of the proof of Theorem~\ref{thm:main} has $D=I$. The second we focus on is decoupling. If $A$, $D$ and $\Sigma$ split simultaneously into $m$ non-interacting blocks, then each block has its own time scale and $\dim\ker d\Phi\ge m$. For example, $K=2$ with $A=\mathrm{diag}(-3,-5)$, $D=\mathrm{diag}(5,2)$, $\Sigma=\mathrm{diag}(7,3)$ already has kernel dimension $2$: the off-diagonal entry of $E$ forces $\delta A_{12}=\delta A_{21}=\delta\Sigma_{12}=0$, and each block is a $K=1$ problem with the kernel of Theorem~\ref{thm:main}. A split into $r$ and $K-r$ populations results in $3r(K-r)$ conditions, so this set has positive codimension and Theorem~\ref{thm:main} is unaffected. On the decoupled set the half-line of $A_a$ can remain identifiable: for $A$ with diagonal blocks $\bigl(\begin{smallmatrix}-2&1\\-1&-3\end{smallmatrix}\bigr)$ and $-1$, $D=\mathrm{diag}(1,2,\tfrac32)$ and $\Sigma$ with diagonal blocks $\bigl(\begin{smallmatrix}1&1/2\\1/2&1\end{smallmatrix}\bigr)$ and $1$, an exact computation gives a kernel spanned by the two block scalings, which change $A_a$ only by a factor. The third is the reversible case: for diagonal $\Sigma$ with $\Sigma ^{-1} A$ symmetric, including the case $A=A^{T}$ and $\Sigma=\sigma I$, the matrix $\Sigma^{-1} M(k)$ is symmetric, so $S(\kappa) = -1/2 \ M(\kappa)^{-1} \Sigma$ is symmetric and a solution of \eqref{eq:lyap} since $M(\kappa)S(\kappa)=S(\kappa)M(\kappa)^T=-1/2 \ \Sigma$.
For every diagonal $G$ with positive entries the triple $(GA,GD,G\Sigma)$ has the same family, since 

$(GM(\kappa))^{-1} G \Sigma = M(\kappa)^{-1} \Sigma$. Then the kernel contains the $K$-dimensional space of directions $(FA,FD,F\Sigma)$ with $F$ diagonal, and $GA$ is in general not symmetric. This means that a symmetric operator with isotropic noise cannot be distinguished from a directed operator with the same sign pattern. This set has codimension $K^2-K$, and with $A=A^{T}$ and generic diagonal $\Sigma$ the kernel is one-dimensional in exact computations for $K \leq 5$.
\end{remark}

The observation from a tissue section does not have to identify the raw family $\{S(\kappa)\}$. If the cells of population $i$ are detected and typed with an unknown probability $p_i$, then independent thinning preserves the pair correlation functions $g_{ij}$ and scales the second term of Equation~\ref{eq:bartlett} by $p_i p_j$, so each population is observed only up to a factor. The following corollary shows that this removes the coupling scales.

\begin{corollary}[Normalized observation]
\label{cor:norm}
Let the field of population $i$ be only observed up to an unknown scalar $g_i>0$, such that the observation is $\{GS(\kappa)G\}$ with $G=\mathrm{diag}(g_1,\dots,g_K)$ unknown. Equivalently, let the family be normalized per population, $S_{ij}(\kappa)/(\sigma_i\sigma_j)$ with $\sigma_i$ a fixed amplitude for population $i$. At a generic $\theta_0=(A,D,\Sigma)\in\Theta$ the set of admissible triples with the same normalized observation is therefore
\[
\bigl\{(sGAG^{-1}, sD, sG\Sigma G): s>0, G=\mathrm{diag}(g_1,\dots,g_K) \text{ with $g_i>0$ for all $i$}\bigr\},
\]
a manifold of dimension $K+1$. The signs and zero pattern of the entries of $A$ are identifiable, the products of entries along closed paths and the diffusivities are identifiable up to the time scale, and the $K$ factors $g_i$ are not identifiable from the second order.
\end{corollary}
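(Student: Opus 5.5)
The plan is to reduce the normalized problem to the unnormalized one by a change of variables, and then apply Theorem~\ref{thm:global}. First, the invariance. For $G=\mathrm{diag}(g_i)$ with $g_i>0$ and $\theta=(A,D,\Sigma)\in\Theta$, set $\theta_G:=(GAG^{-1},D,G\Sigma G)$. Since $D$ is diagonal it commutes with $G$, so $M_{\theta_G}(\kappa)=GM(\kappa)G^{-1}$. This matrix is Hurwitz, hence $\theta_G\in\Theta$. Conjugating \eqref{eq:lyap} by $G$ shows that $GS(\kappa)G$ solves the Lyapunov equation for $\theta_G$. By uniqueness, $S_{\theta_G}=GS_\theta G$. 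Combining this with Proposition~\ref{prop:scale}, every triple in the displayed set produces an observation of the form $G'S_{\theta_0}G'$, so all of them are consistent with the same normalized data. The equivalence with per-population normalization $S_{ij}/(\sigma_i\sigma_j)$ is immediate: if $\sigma_i$ is a fixed amplitude such as $S_{ii}(\kappa^\ast)^{1/2}$, it transforms as $\sigma_i\mapsto g_i\sigma_i$, so the normalized family is exactly the invariant of the $G$-orbit.

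Second, the converse. Suppose $\theta$ gives $G_1S_\theta G_1=G_2S_{\theta_0}G_2$ for all $\kappa$. Put $H=G_2^{-1}G_1$; then $S_{\theta_H}=S_{\theta_0}$. Since $\theta_0$ is generic, Theorem~\ref{thm:global} gives $\theta_H=s\theta_0$. Unwinding this yields $A=sH^{-1}A_0H$, $D=sD_0$ and $\Sigma=sH^{-1}\Sigma_0H^{-1}$, which has the claimed form with $G=H^{-1}$.

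Third, the dimension. The map $(s,g)\mapsto(sGA_0G^{-1},sD_0,sG\Sigma_0G)$ has a positive-scalar redundancy in $G$, because $G\mapsto cG$ leaves $GAG^{-1}$ invariant but rescales $\Sigma$ by $c^2$. I therefore have to check the count carefully. The parametrization effectively uses $(s,g_1,\dots,g_K)$, and it is injective whenever $D_0\neq0$ (which fixes $s$) and $\Sigma_0$ has positive diagonal (which fixes $g_i^2$). The image is then a $(K+1)$-dimensional manifold, and I will verify injectivity of the differential directly from these two facts.

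Fourth, the invariants. Since $(GAG^{-1})_{ij}=g_iA_{ij}/g_j$, with $g_i/g_j>0$, the signs and the zero pattern of $A$ are preserved. Along a closed path $i_1\to\cdots\to i_m\to i_1$ the $g$-factors telescope away, so the cycle products change only by $s^m$. Finally, $D$ changes only by $s$. Non-identifiability of the $g_i$ is exactly the $K$ extra directions of the orbit. The main obstacle is genericity: it must be applied to $\theta_0$ while the orbit passes through $\theta_H$. This is harmless because the argument only invokes Theorem~\ref{thm:global} at $\theta_0$ itself. One should also note that the generic set is $G$-invariant, since $\dim\ker d\Phi$ is preserved under the linear isomorphism $\theta\mapsto\theta_G$, which intertwines $\Phi$ with $S\mapsto GSG$.
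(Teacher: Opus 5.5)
Your proposal is correct and follows the paper's proof. Both use the conjugation $\theta^{G}=(GAG^{-1},D,G\Sigma G)$ with $\Phi(\theta^{G})=G\Phi(\theta)G$, reduce to Theorem~\ref{thm:global}, and read off the invariants from the telescoping factors $g_i/g_j$. The only variation is that you apply Theorem~\ref{thm:global} at $\theta_0$ itself by moving $H=G_2^{-1}G_1$ onto the other triple, whereas the paper applies it at $\theta^{G}$ and therefore needs the $G$-invariance of genericity. Also, your aside about a ``positive-scalar redundancy'' under $G\mapsto cG$ is not a redundancy, since $\Sigma$ changes, as your own injectivity check confirms.
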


\begin{proof}
For diagonal positive $G$ and $\theta=(A,D,\Sigma)\in\Theta$ define $\theta^{G}:=(GAG^{-1},D,G\Sigma G)$. Since $G$ commutes with the diagonal $D$, $M_{\theta^{G}}(k)=GM_{\theta}(k)G^{-1}$, and multiplying \eqref{eq:lyap} for $\theta$ from the left and from the right by $G$ shows that $GS_{\theta}(\kappa)G$ solves the Lyapunov equation of $\theta^{G}$. By uniqueness we receive $\Phi(\theta^{G})=G\Phi(\theta)G$. Moreover $\theta^{G}\in\Theta$, because $GAG^{-1}$ has the eigenvalues of $A$ and $G\Sigma G$ is positive definite, and both $\theta\mapsto\theta^{G}$ and $S\mapsto GSG$ are linear isomorphisms, such that $\dim \ker d\Phi|_{\theta^{G}} = \dim \ker d\Phi|_\theta$ and $\theta^G$ is generic whenever $\theta$ is. Now let $\theta,\theta'\in\Theta$ have the identical normalized observation, that is, $\Phi(\theta')=G\Phi(\theta)G$ for a $G$ diagonal matrix with positive entries. Then $\Phi(\theta')=\Phi(\theta^{G})$ with $\theta^{G}$ generic, and from Theorem~\ref{thm:global} we have $\theta'=s\theta^{G}$ for $s>0$.
Conversely, every triple $(sGAG^{-1},sD,sG\Sigma G)$ has the normalized observation of $\theta$ by $\Phi(\theta^{G})=G\Phi(\theta)G$ and Proposition~\ref{prop:scale}. The parameters $(s,g_1,\dots,g_K)$ are determined by the triple, so the set is a manifold of dimension $K+1$. The invariants follow from $(GAG^{-1})_{ij}=g_iA_{ij}/g_j$ with $g_i/g_j>0$. For products of entries along a closed path these factors cancel. The individual factors $g_i$ themselves do not appear in the normalized observation and are therefore not identifiable from it.
\end{proof}

\section{Application to tumor-immune tissue}\label{sec:application}

Let \eqref{eq:rde} hold with $K$ cell types and $\rho_i(x,t)$ the local density of cell type $i$ (e.g. tumor, cytotoxic effector, regulatory or suppressor cell, macrophage). The diagonal entries of $A$ collect the net per-type turnover (tumor proliferation, effector expansion or exhaustion), the off-diagonal entries the directed interactions, for example killing ($A_{\mathrm{tumor},\mathrm{effector}}<0$), recruitment ($A_{\mathrm{effector},\mathrm{tumor}}>0$) and suppression ($A_{\mathrm{effector},\mathrm{regulatory}}<0$). $A$ is non-symmetric since these effects are one-directional. $D$ collects the cell-type-dependent, generally distinct cell motilities, and $\Sigma$ the demographic and interaction noise. A single multiplexed section (for example IMC, CODEX or Xenium) yields the composition $\lambda$ and the matrix of cross-cell-type pair correlations $g_{ij}(r)$, that is, the structure factor from Section~\ref{sec:setup}. For a single population we have computed the corresponding pair correlation function and spectrum from nucleus centroids with polygonal edge correction and a Hankel transform \citep{schlicke2026hdtd}, and the multi-population observation is the matrix-valued analog. In practice, the observation is obtained from a tissue section in several steps: cells are segmented and assigned to cell types, the cross-type pair correlation functions $g_{ij}(r)$ are estimated with edge correction and are transformed to $S_{ij}(\kappa)$ by \eqref{eq:bartlett} with a Hankel transform, and the Poisson term $\lambda_i\delta_{ij}$ is subtracted. Then, $(A,D,\Sigma)$, with $\Sigma$ the wavenumber-independent effective covariance of the Poisson-corrected family, is fitted to it on the resolved interval up to the time scale. By Corollary~\ref{cor:shells}, $\binom{K+1}{2}+1$ distinct wavenumbers are sufficient. The theorems state what this fit can determine in principle. The estimator, its error model and the requirements for acquisition are future work.

The diagonal of $A$ can, for example, be read against a proliferation marker. Ki67 estimates the tumor proliferative fraction, but it does not by itself identify the external clock that the time scale requires. It is degraded continuously after cell-cycle exit, so its level is a function of the time since the last division and the decay requires calibration \citep{miller2018}. The diagonal entry of the effector population is its net turnover which e.g. includes the progression from progenitor to terminally exhausted T cells. The two ends of this trajectory are marked by TCF7 together with SLAMF6 (progenitor) and TIM-3 (terminal) \citep{miller2019}, and TOX regulates the exhaustion program \citep{khan2019,scott2019}.

In general, the diffusivities differ across cell types. The tail of an autospectrum alone does not identify $D_i$ but the ratio $D_i/\Sigma_{ii}$, since $S_{ii}(\kappa) \to \Sigma_{ii}/(2\kappa D_i)$ for large $\kappa$. However, the ratios $D_i/D_j$ of the cell motilities are identifiable by Theorem~\ref{thm:global}. By Remark~\ref{rem:degen}, equal motilities alone do not remove identifiability. Among the there mentioned exceptional sets only full isotropy, with $A_s$, $D$ and $\Sigma$ all $\propto I$, involves equal motilities.

As an example, take tumor ($T$), cytotoxic effector ($E$) and macrophage ($M$) as three different cell types, and assume the interaction operator
\begin{equation}
A =
\begin{pmatrix}
-1 & -0.6 & 0.5\\
0.7 & -1.2 & -0.5\\
0.4 & 0.2 & -1.4
\end{pmatrix},
\qquad
D =
\begin{pmatrix}
1 & 0 & 0\\
0 & 2 & 0\\
0 & 0 & 1.4
\end{pmatrix},
\label{eq:worked}
\end{equation}
with the positive definite noise covariance
\begin{equation}
\Sigma =
\begin{pmatrix}
1 & 0.15 & -0.1\\
0.15 & 0.9 & 0.05\\
-0.1 & 0.05 & 1.2
\end{pmatrix}.
\label{eq:workedsigma}
\end{equation}
The off-diagonal signs are characteristic signs for the application context: effectors kill tumor cells ($A_{TE}<0$), tumor recruits effectors ($A_{ET}>0$), macrophages support tumor ($A_{TM}>0$), tumor recruits macrophages ($A_{MT}>0$), macrophages suppress effectors ($A_{EM}<0$). The operator is stable, its largest eigenvalue real part is $-0.874$, its symmetric part is negative definite, and a stationary field exists. The ratio of the two circulation directions $\lvert A_{TE}A_{EM}A_{MT}\rvert / \lvert A_{ET}A_{ME}A_{TM}\rvert$, is $1.71$: with $A_{ij}$ the effect of type $j$ on type $i$, the stronger loop is the tumor $\to$ macrophage $\to$ effector $\to$ tumor (numerator).

We now construct an alternative model that agrees with the above one at a single wavenumber. Take the transposed operator $A^{T}$, which reverses every directed interaction from above, keep the identical motility matrix $D$, and choose the noise covariance $\Sigma'$ such that the two models agree exactly at one wavenumber $\kappa_0$, which we assume here as $\kappa_0=1$. This is the linear equation $\Sigma' = -(M'X_0 + X_0 M'^{T})$ with $M'=A^{T}-\kappa_0 D$ and $X_0 := S(\kappa_0)$ the structure factor of the original model at $\kappa_0$. We have
\begin{equation}
A^T =
\begin{pmatrix}
-1 & 0.7 & 0.4\\
-0.6 & -1.2 & 0.2\\
0.5 & -0.5 & -1.4
\end{pmatrix},
\qquad
\Sigma' =
\begin{pmatrix}
0.895 & 0.255 & -0.074\\
0.255 & 1.008 & 0.029\\
-0.074 & 0.029 & 1.196
\end{pmatrix},
\end{equation}
\begin{equation}
X_0 =
\begin{pmatrix}
0.243 & 0.042 & 0.024\\
0.042 & 0.150 & 0.001\\
0.024 & 0.001 & 0.218
\end{pmatrix}.
\label{eq:worked_new}
\end{equation}
We have that $\Sigma'$ is symmetric and positive definite, and its eigenvalues are $0.68$, $1.19$ and $1.23$. For $\kappa_0$ the two models produce the same covariance matrix of the field and are indistinguishable from one another. This is an example of the degeneracy described in \cite{godreche2019,zhang2023}.

\begin{figure}[t]
\centering
\includegraphics[width=\textwidth]{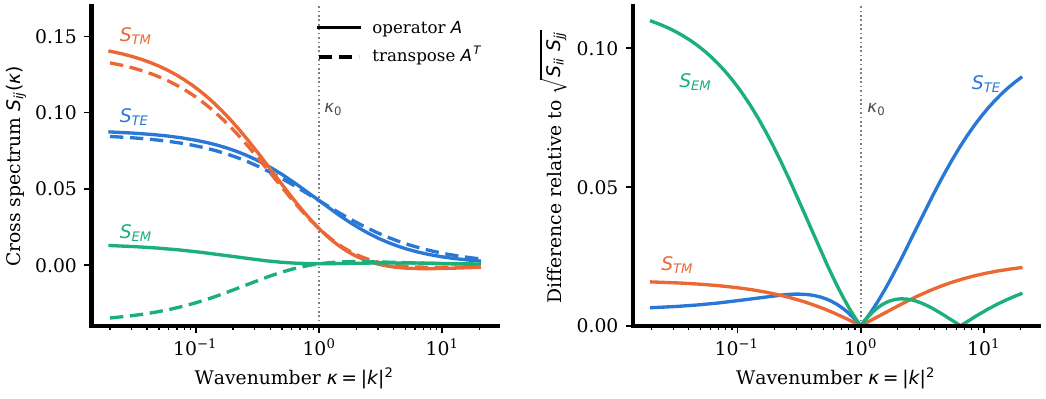}
\caption{The observation of only one covariance hides the interaction direction and cannot tell $A$ and $A^{T}$ apart, while the observation of the wavenumber family enables recoverability of direction. \textbf{Left:} the three cross spectra of the operator \eqref{eq:worked} (solid) and of its transpose (dashed) with the noise covariance refitted such that both agree at the observed single wavenumber $\kappa_0$ (dotted line). \textbf{Right:} the absolute difference between the two, in units of the spectrum's scale $\sqrt{S_{ii}S_{jj}}$. It vanishes at $\kappa_0$ by construction and grows differently on both sides, reaching about $11\%$ at the long-wavelength end and $9\%$ at the short end of the plotted range. This difference distinguishes the two models.}
\label{fig:familie}
\end{figure}

Figure~\ref{fig:familie} shows that the single covariance at $\kappa_0$ cannot tell $A$ and $A^{T}$ apart, i.e., directionality cannot be inferred from it, but that the full covariance family can unlock what happens off $\kappa_0$. The two models separate systematically in both directions, by about $11\%$ of $\sqrt{S_{ii}S_{jj}}$ at the long-wavelength end and about $9\%$ at the short end of the plotted range, against a difference that vanishes at $\kappa_0$ by construction. Since $S_{ij} \approx \Sigma _{ij}/(\kappa (D_I + D_j))$, towards large $\kappa$ the difference approaches a plateau that is independent of the operator and set by $\Sigma ' - \Sigma$. At every $\kappa$ the structure factor is real and symmetric (Lemma~\ref{lem:sym}), so the direction is defined by the variation with $\kappa$. Near $\kappa_0$ the two models are only slightly different, so an estimator restricted to a narrow band around $\kappa_0$ is close to degenerate irrespective of the number of cells, and the smallest resolvable wavenumber is set by the extent of the field. The figure shows that the information is in principle present in the model class, but not that it is present in a particular observed tissue section.

\section{Discussion}\label{sec:discussion}

By Theorem~\ref{thm:global}, the directed operator $A$ (in the example: killing, recruitment, suppression) and the motilities $D$ are globally identifiable from $\{S(\kappa):\kappa\ge0\}$ up to a positive time scale, at a generic point of the model class (Definition~\ref{def:generic}), for any number $K$ of cell types, and agreement at $\binom{K+1}{2}+1$ distinct wavenumbers already fixes the entire covariance family (Corollary~\ref{cor:shells}). The identified part of $A_a$ is its half-line, i.e., the sign pattern and the ratios of its entries. Under the per-cell-type normalization, the signs, zeros of $A$ and the products of entries along closed paths survive, and the $K$ coupling scales are lost (Corollary~\ref{cor:norm}). The direction can be inferred from $S(k)$ through the dependence of the cross- and autospectra on $\kappa$, but not through the mean densities.
These are independent of $(A,D,\Sigma)$ in the linearized model. The cross spectra $S_{ij}(\kappa)=S_{ji}(\kappa)$ are symmetric in the pair, and together with the autospectra their family identifies $A_a$. Descriptors that translate the pattern to a $K\times K$ matrix (such as the above mentioned neighborhood enrichment or co-localization) at one scale can determine at most $K^2$ numbers for the $n_U-1$ parameters after the time scale. From this work, it remains unknown whether these matrices can infer directionality.
How the estimate's conditioning depends on the motility differences is not investigated here. Further, with unknown $\Sigma$, no snapshot observation can fix the absolute rates (Proposition~\ref{prop:scale}), but external clocks can provide this scale. These are, for example, a second time point, or a division marker (with calibrated decay after cell-cycle exit \citep{miller2018}).
Parameters of calibrated mechanistic models have served as patient-specific markers of outcome, both with a clock available \citep{benzekry2024} and, for a single population, on a normalized time scale \citep{schlicke2026hdtd}. Analogous use of the half-line of $A$ does not require a clock, but statements about absolute rates do. A baseline section observation therefore constrains $A$ and $D$ up to the time scale, i.e., the linearized interaction structure about the stationary state, but not the nonlinear dynamics beyond. However, the latter is required to determine basins of attraction or separatrices. Two groups that share the same baseline structure but differ in absolute rates are indistinguishable by construction. A second section observation or a proliferation anchor can remove this ambiguity.

The entries of $A$ should not be confused with rate constants. Linearization of a mass-action term $-c\rho_T \rho_E$ around $(\lambda_T,\lambda_E)$ gives $A_{TE}=-c\lambda_T$, so two tissues with identical biology but different composition have a different $A$. Wherever the entries of $A$ are compared across patients or cohorts, $A_{ij}/\lambda_i$ should be reported alongside $A_{ij}$ for mass-action interactions. The ratio of the two circulation directions is invariant under this correction and under normalization.

The results relate to several previous works. The bound that a single stationary covariance leaves the antisymmetric part of a linear drift undetermined \citep{godreche2019,zhang2023} is not absolute even for one covariance. For a drift whose directed graph contains no directed $2$-cycles, the drift is globally identifiable from one stationary covariance \citep{dettling2023}. For a connected graph with a self-loop at every node, non-Gaussian stationary data identify the drift generically through higher order cumulants if the second-order and one higher-order cumulant of the noise are diagonal, again up to a common positive factor \citep{recke2026}. Sparsity enables the same, for example as in the estimation of directed brain connectivity from zero-lag covariances by $L^1$-minimization \citep{schiefer2018}. None of these routes is available for our context: tumor-immune interactions contain reciprocal pairs and dense blocks, thus contain $2$-cycles, and the result of \cite{dettling2023} requires a known volatility matrix. Within continuous Lyapunov models, model equivalence for acyclic graphs with uncorrelated noise is finer than Markov equivalence \citep{amendola2025}, and edge-sign identifiability has been studied for known causal structure, diagonal diffusion and a single stationary covariance \citep{vanseeventer2026}. Relative to this baseline, Theorem~\ref{thm:main} highlights the gain of the spatial degree of freedom: with the family $\{S(\kappa)\}$ in place of one covariance, no graph assumed and $D$ unknown, the directed operator is recovered as a whole.

Single-snapshot reconstruction has been done for birth-diffusion-death processes, where only dimensionless ratios are identifiable \citep{miles2023}, for pattern-forming reaction-diffusion models from one partially observed image \citep{kazarnikov2025}, or for scalar parameter fields of spatial stochastic dynamics \citep{gu2026}. The structural identifiability of the fully observed linear-in-parameter parabolic equations has been analyzed through elliptic operators \citep{salmaniw2024}. For a three-state velocity-jump process the parameters are globally identifiable from single-particle trajectories, but only locally identifiable from the total particle density under distinct velocities \citep{ceccarelli2026}. In a neighboring application, cell-population dynamics have been reconstructed from one spatial-proteomics section by identifying division rates as functions of the neighborhood compositions of cells \citep{somer2026}. This work fitted a predictive model of how the composition evolves over time, whereas the objective here is the identification of a directed interaction operator along with the data requirements to determine it.

Recently, Celcomen identifiably retrieved the undirected coupling of a Gibbs model on a fixed spatial adjacency and the authors state that only the Markov equivalence class of the underlying graph can be recovered without interventional data and without further assumptions \citep{megas2026}. The bound concerns nonparametric i.i.d. observational data from a static graph. The additional physical structure utilized in the present work is a non-equilibrium stationary state of a spatially extended stochastic dynamics, and the observable is the full covariance family $\{S(\kappa)\}$. For known $(D,\Sigma)$, $A$ is not identifiable when nonzero $J$ results in antisymmetric $JS(k)$ for all $k$ by Corollary~\ref{cor:id}, and for unknown $(D,\Sigma)$ Theorem~\ref{thm:main} and Remark~\ref{rem:degen} show that a kernel beyond the time scale happens only in very specific cases. The difference between the two approaches is the modeling assumptions, which can be tested on data.

The structure factor of a reaction-diffusion Langevin field for a single population $K=1$ is textbook material \citep{gardiner}, and its explicit form has been given repeatedly \citep{brito2007,kim2017}. For several interacting species, fluctuation spectra in the linear noise approximation are utilized in the analysis of stochastic Turing patterns \citep{biancalani2010,mckane2014}. Here, we focused on the inverse problem for multiple interacting populations. The identifiability proved here is relative to this model class, $A$ does not depend on $k$, the wavenumber-dependence is through $-|k|^2D$, and white noise with $\Sigma$ independent of the wavenumber $k$. Without these restrictions, any realizable $g_2(r)$ or $S(k)$ of a translationally invariant non-equilibrium system is conjectured to be reproducible by an equilibrium ensemble with effective pair interactions \citep{torquato2022,zhangtorquato2020}, and the equilibrium uniqueness results concern a symmetric pair potential of a single component determined by the pair correlation function \citep{frommer2019,ariel2020}.

Important limitations remain. First, the theory is linear and describes the linearization around a stationary state. The bistable control-versus-escape structure of tumor-immune dynamics \citep{kuznetsov1994,dunn2004}, on which calibrated immunoediting models forecast checkpoint-inhibitor outcomes \citep{scibilia2024}, lies beyond this linear core. The identifiability results do not extend to it, and forward simulations of calibrated models for counterfactual treatment comparisons, as have been used in preclinical and clinical settings \citep{schlicke2026bjc,schlicke2026hdtd}, require the nonlinear model. However, such an extension builds on the identifiability of the linearized operator. Second, the field is assumed to be stationary, whereas a growing tumor is transient. The results apply to regions and time windows in which the stationary state is approximately valid. The small wavenumbers contain the information for $A$, and their relaxation times are expected to be the highest. This means the identifiable and the stationary band are expected to point in opposite directions. Third, the field is treated as a linear Gaussian one. The second order results do not use the Gaussianity but the linear field serves as an approximation to a discrete cellular system and enlarging the field at fixed density is not expected to improve it.

Several unaddressed questions remain. A wavenumber-dependent noise covariance is not treated in general. For cell populations, motion is a conserved process that contributes $2 D_i \lambda_i |k|^2$ to the noise of population $i$ \citep{kim2017}. On the other hand, birth and death only contribute a constant, and $\Sigma(k) = \Sigma_0 + 2 \kappa D \Lambda$ with $\Lambda=\mathrm{diag}(\lambda_1,\dots,\lambda_K)$, and $S(k) \to \Lambda$ for large $\kappa$, which is exactly the removed Poisson term. The Poisson-corrected family $S-\Lambda$ solves \eqref{eq:lyap} with the covariance $\Sigma_0+A\Lambda+\Lambda A^{T}$ which is independent of the wavenumber. This covariance does not need to be positive definite. The identifiability results (Lemma~\ref{lem:finite} to Corollary~\ref{cor:norm}) only use Assumption~\ref{ass:stab} and the symmetry of $\Sigma$. Positivity of $\Sigma$ is only needed in Lemma ~\ref{lem:sym}, Proposition~\ref{prop:inv} and Corollary~\ref{cor:id}, and for $s=0$ in Lemma~\ref{lem:inverseimage} it is replaced with $A=0$ not being Hurwitz. Thus, they cover the entire Poisson-corrected family.
Further, the application to tumor-immune tissue takes a dimension of $d=2$, but histological sections are planes cut through three-dimensional tissues. The structure factor of such a cut is the three-dimensional integration over the wavenumber normal to the section, with a different $\kappa$-profile and a tail that decays with $|k|^{-1}$ rather than $|k|^{-2}$. With space-time white noise the solution of \eqref{eq:rde} is not function-valued for $d\ge2$ and $C(0)$ diverges, logarithmically for $d=2$ and as a power for $d=3$, so a cutoff at the cell scale is necessary. A directed transport $V \cdot \nabla u(x,t)$ with a cell-type-dependent velocity $V$ makes $S(k)$ complex Hermitian and anisotropic. Lemma~\ref{lem:sym} then no longer holds. This results in a different identifiability problem \citep{browning2023,ceccarelli2026}, deliberately excluded here by the model choice. Continuous Lyapunov models commonly assume a diagonal $\Sigma$ \citep{dettling2023,recke2026}, whereas a flux between two cell types, for example progenitor to terminally exhausted effector or monocyte to macrophage, results in anticorrelated demographic noise with $\Sigma_{ij}<0$. The theorems admit any symmetric positive definite $\Sigma$, and the reference point in the proof of Theorem~\ref{thm:main} has $\Sigma$ with every entry nonzero. This means that estimators for these pairs may be misspecified, whereas they are covered by the identifiability results.

Corollary~\ref{cor:norm} has one unknown factor per population. Normalizations that depend on the scale, the finite and masked observation window, and the two-dimensional section through three-dimensional tissue are left open. Further, the theorems have to be extended for the third-order statistics, which are expected to recover the interaction scales when their source is linked to the same rates as the operator (which is the case e.g. for birth and death). Without this extension or a constraint on $\Sigma$ that is not invariant under $\Sigma \mapsto G \Sigma G$, the coupling remains open whenever the detection probabilities are unknown.

\backmatter

\section*{Declarations}

\textbf{Funding.}
Pirmin Schlicke was funded within the APART-USA program of the Austrian Academy of Sciences (OeAW).\\
\textbf{Competing interests.}
The author declares no competing interests.\\
\textbf{Data availability.}
No datasets were generated or analyzed in this article.\\
\textbf{Code availability.}
The code necessary to reproduce Figure \ref{fig:familie} and the rank computations of Remark \ref{rem:degen} are available from the author.

\bibliography{sn-bibliography}

\end{document}